\theoremstyle{plain}
\newtheorem{theorem}{Theorem}[chapter]
\newtheorem{example}{Example}[chapter]
\newtheorem{definition}{Definition}[chapter]
\pgfplotsset{compat=1.16}
\newcommand{\txt}[1]{\text{\normalfont #1}}  
\newcommand{\zero}{\textcolor[rgb]{.9,.9,.9}{0}}
\DeclareMathOperator{\T}{\top}
\DeclareMathOperator{\HT}{H}
\DeclareMathOperator{\F}{F}
\DeclareMathOperator{\rank}{rank}
\DeclareMathOperator{\diag}{diag}
\DeclareMathOperator{\vecm}{vec}
\DeclareMathOperator{\kron}{\otimes}
\DeclareMathOperator{\krao}{\odot}
\DeclareMathOperator{\tx}{t}
\DeclareMathOperator{\rx}{r}
\DeclareMathOperator{\txrx}{\tx\!\rx}
\DeclareMathOperator{\xx}{\xi}
\DeclareMathOperator{\separator}{:}
\renewcommand\cite{\citep}
\newcommand\blfootnote[1]{%
	\begingroup
	\renewcommand\thefootnote{}\footnote{#1}%
	\addtocounter{footnote}{-1}%
	\endgroup
}
\begin{document}
	
	\mainmatter
	\tableofcontents
	\thispagestyle{empty}
	\setcounter{chapter}{8}
	
	\setcounter{page}{1}
\chapter{Sparse Sensor Arrays for Active Sensing: Models, Configurations and Applications}

Robin Rajam\"{a}ki and Visa Koivunen (Aalto University, Finland)
%
%
%
%


\section{Introduction}\blfootnote{The work of R.R. was supported by the Ulla Tuominen foundation, and the work of V.K. by the Nokia Foundation's visiting professor grant. Both R.R. and V.K. were also supported by the Finnish Defense Research Agency.}

In active sensing, sensors such as antennas or piezeoelectric transducers probe their surroundings by emitting self-generated energy. The signals backscattered to the receivers are then processed to extract information from the physical environment---typically in the form of target detections and parameter estimates. Classical applications of active sensing include radar, sonar, wireless communications, medical ultrasound. Active sensing also plays a key role in emerging applications such as autonomous sensing \cite{hugler2018radar}, robotics \cite{schouten2019principles}, automotive radar \cite{patole2017automotive,sun2020mimoradar}, and integrated sensing and communications \cite{mishra2019toward,ma2020joint,ahmadipour2022aninformation}.

In contrast to passive (receive-only) sensing, active sensing systems have the flexibility to control the properties of the emitted signals, commonly known as \emph{waveforms}. Waveforms can be adapted to various tasks, such as target search or tracking, for example, by shaping the waveform in the time, frequency, polarization, or angle domain. Indeed, a key benefit of an active \emph{multi-sensor array} is the ability to focus and steer energy in desired directions by beamforming on transmit. This provides coherent combining gain and  spatial selectivity, which improves signal-to-noise-ratio (SNR), enables resolving closely spaced scatterers, reduces interference in and from undesired directions, and increases reliability of communication links.

As previous chapters of this book have demonstrated, \emph{sparse sensor arrays} offer several advantages over conventional uniform array geometries. These benefits include improved resolution using fewer physical sensors, the capability to identify more scatterers than physical sensors, as well as robustness to antenna system non-idealities, such as mutual coupling. In active sensing, much of this is facilitated by the effective transmit-receive virtual array known as the \emph{sum co-array}. The sum co-array consists of the pairwise sums of transmit and receive sensor positions. Properly designed sparse arrays with $N_{\tx}$ physical transmitters and $N_{\rx}$ receivers can therefore achieve a uniform virtual array with on the order of $N_{\tx}N_{\rx}$ virtual sensors---corresponding to an increase by a factor of $N_{\tx}$ compared to a receive-only array. The benefits of a large virtual array are well-known in applications ranging from coherent imaging \cite{hoctor1990theunifying} to multiple-input multiple-output (MIMO) radar \cite{li2007mimoradar}. The sum co-array is analogous to the difference co-array in passive sensing. However, whereas the difference co-array results from computing second-order statistics under the assumption of uncorrelated source signals, the sum co-array requires no such statistical assumptions.

Most works on sparse arrays in the past decade have focused on passive sensing. As a result, active sparse arrays are relatively less explored. Many concepts originally developed for passive sparse arrays can be extended to the active case---e.g., see \cite{hoctor1996arrayredundancy} in case of the Minimum-redundancy array (MRA) \cite{moffet1968minimumredundancy}. Fundamental differences between the two sensing modalities nevertheless give rise to unique, sometimes subtle, design issues. For example, one may or may not have the freedom to independently choose the transmit and receive sensor positions. Hence, care must be taken when defining active array redundancy and the (active) MRA configuration. If transmitters and receivers are not constrained to occupy the same positions, completely non-redundant configurations are widely known. The most notable example is perhaps the nested geometry also known as the MIMO array \cite{li2007mimoradar}. However, less is known about MRAs when the physical transmit and receive arrays are constrained to have the same geometry. Such configurations---regularly used in monostatic radar and medical ultrasound applications---are of interest when the area for placing the physical sensors is tightly constrained. A disadvantage of these MRA geometries is that they are computationally expensive to find. Hence, scalable low-redundancy array geometries need to be developed. A natural question is therefore whether well-known passive array configurations, such as the Nested array \cite{pal2010nested}, could be adapted to active sensing? This chapter provides a positive answer by outlining a general and pragmatic framework for generating symmetric arrays suitable for both active and passive sensing.

Another pertinent question when employing sparse arrays is how to mitigate grating lobes arsing from spatial undersampling. In active sensing, the flexibility of independently controlling both the transmit (Tx) and receive (Rx) beampatterns provides a means to shape the effective Tx-Rx beampattern, which is often of primary interest in sensing, especially when employing linear processing at both the transmitter and receiver. Since this effective beampattern is the product of the Tx and Rx beampatterns, a null in one beampattern can cancel a side lobe in the other. A general model for synthesizing desirable Tx-Rx beampatterns, known as image addition, was developed by Kassam \emph{et al.} \cite{luthra1983sidelobe,hoctor1990theunifying,kozick1991linearimaging} building on prior works on coherent and incoherent imaging in optical holography \cite{gabor1965optical} and radio astronomy \cite{wild1965anew}, respectively. Image addition synthesizes a desired Tx-Rx beampattern by linearly combining so-called component beampatterns obtained using different Tx-Rx beamforming weight pairs. However, when the array configuration contains redundant virtual sensors, there is an opportunity to optimize the beamforming weights. Consequently, this chapter considers the problem of minimizing the number of component beampatterns, and hence the synthesis time of the desired composite beampattern, since each component may correspond to a separate transmission. We also briefly discuss the fundamental spatio-temporal trade-off between sparsity of the array geometry and the number of component weights required to synthesize any feasible Tx-Rx beampattern.

\subsection{Goals, scope, and organization}
The goal of this chapter is to provide an accessible introduction to active sparse array design and signal processing. We attempt to strike a balance between providing a useful resource on active sparse arrays for practitioners, as well as highlighting problems of theoretical interest to researchers in the field. 

We focus on joint transmit-receive beamforming and one-dimensional array configurations with low-redundancy; largely following our two previous articles \cite{rajamaki2020hybrid,rajamaki2021sparsesymmetric}, and the first author's thesis \cite{rajamaki2021sparsesensor}. For brevity, several important topics in active sensing are treated only cursorily. Notable omissions include waveform design, space-time adaptive processing, and high-resolution localization algorithms. For detailed treatments on these topics, see \cite{levanon2004radar,guerci2015spacetime,stoica2005spectral} and references therein.

The chapter is organized as follows. \cref{sec:signal_model} presents the active sensing signal model. \cref{sec:geometry} reviews sparse array configurations for active sensing, with an emphasis on the active MRA, and scalable low-redundancy symmetric sparse arrays. \cref{sec:coarray_beamforming} examines beamforming using sparse arrays. In particular, we consider the problem of synthesizing a desired joint transmit-receive beampattern using multiple component beampatterns. \cref{sec:applications} briefly illustrates example applications of sparse active arrays. Finally, \cref{sec:conclusions} concludes the chapter.

\subsection{Notation}
 We denote matrices by boldface uppercase, e.g, $ \bm{A} $; vectors by boldface lowercase, $ \bm{a} $; and scalars by unbolded letters, $ A,a $. The $(n,m) $th element of matrix $ \bm{A} $ and $\bm{A}_i$ is $ A_{nm} $ and $ [\bm{A}_i]_{n,m} $, respectively. Furthermore, $ \bm{A}^{\T} $, $ \bm{A}^{\HT} $, and $ \bm{A}^\ast $ denote transpose, Hermitian transpose, and complex conjugation. The $N\times N$ identity matrix is denoted by $ \bm{I}_N $, and the standard unit vector, consisting of zeros except for the $ i $th entry, which is unity, by $ \bm{e}_i $ (dimension specified separately). Moreover, $ \txt{vec}(\cdot) $ stacks the columns of its matrix argument into a column vector, whereas $ \txt{diag}(\cdot) $ constructs a diagonal matrix of its vector argument. The ceiling function is denoted by $ \lceil \cdot \rceil $. The Kronecker and Khatri-Rao (columnwise Kronecker) products are denoted by $ \kron $ and $ \krao $, respectively. Sets are denoted by bold blackboard letters, such as $\mathbb{A}$. The maximum element of a finite set of real-valued scalars $\mathbb{A}$ is denoted by $\max\mathbb{A} $; the subset of integers between $a,b\in\mathbb{R}$ by $ [a\separator b]\triangleq \{c\in\mathbb{Z}\ |\ a\leq c\leq b; a,b\in \mathbb{R}\}$; and the sum set of $\mathbb{A}$ and $\mathbb{B}$ by $\mathbb{A}+\mathbb{B}\triangleq\{a+b\ |\ a\in\mathbb{A}, b\in\mathbb{B}\}$. Finally, subscripts `$ {\tx} $', `$ {\rx} $', and `${\xx}$' denote ``transmitter'', ``receiver'', and ``transmitter or receiver'', respectively.

\section{Active sensing signal model}\label{sec:signal_model}

This section introduces the considered active sensing signal model. For simplicity of presentation, we restrict our attention to linear transmit and receive array geometries with fully digital beamforming architectures. This implies a discrete-time model, where each transmit sensor may launch a different waveform, and the output of each receive sensor is available. For extensions to planar arrays and hybrid or analog beamforming, see \cite{rajamaki2021sparsesensor}.

\subsection{Physical array model}\label{sec:physical_model}

\cref{fig:signal_model} depicts the considered active sensing MIMO model consisting of co-located transmit (Tx) and receive (Rx) arrays. Co-locatedness means that the distance between the Tx and Rx array is negligible w.r.t. their distance to the scatterers, i.e., both arrays observe the same scatterer angles \cite{li2007mimoradar}. This is in stark contrast to distributed arrays leveraging angular diversity \cite{fishler2006spatial}. The Tx and Rx arrays are assumed to have a one-dimensional co-linear geometry.
\begin{figure}
	\centering
	\includegraphics[width=.7\linewidth]{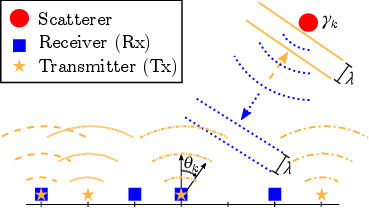}
	\caption{Active sensing signal model. A linear Tx array with $N_{\tx}$ sensors illuminates $K$ far field point scatterers using $N_s\leq N_{\tx}$ linearly independent waveforms. A linear Rx array with $N_{\rx}$ sensors, co-located with the Tx array, measures a superposition of the backscattered planar wavefronts.}
	\label{fig:signal_model}
\end{figure}

The Tx array illuminates a collection of $K$ far field point scatterers, which backscatter the impinging wavefield back at the Rx array. The baseband Tx signals are modulated by a carrier frequency of wavelength $\lambda$ before transmission. The carrier frequency is assumed much larger than the bandwidth of the baseband signal, such that a narrowband signal model approximately holds. We assume that the Tx/Rx sensors lie on a uniform grid of integer multiples of $\lambda/2$ given by set $\mathbb{D}_{\xx}\in\mathbb{Z}$. The number of Tx/Rx sensors is denoted by $N_{\xx}\triangleq |\mathbb{D}_{\xx}|$.

\subsubsection{Angle-delay-Doppler model}
The transmitted signals experience different round-trip delays depending on the distance from the sensing platform to the scatterers, and Doppler shifts due to the relative difference in radial motion between the platform and scatterers. The $k$th scatterer is therefore generally parametrized by scattering coefficient $\gamma_k$, azimuth angle $\theta_k$, delay $\tau_k$, and Doppler shift $\omega_k$. A simple model for the continuous-time baseband Rx array output is \cite{friedlander2012onsignal} 
\begin{align}
	\bm{y}(t)=\sum_{k=1}^K\gamma_k \bm{a}_{\rx}(\theta_k)\bm{a}_{\tx}^{\T}(\theta_k)\bm{s}(t-\tau_k)e^{j\omega_k t}+\bm{n}(t).\label{eq:model_general}
\end{align}
Here $\bm{a}_{\xx}(\theta_k)\in \mathbb{C}^{N_{\xx}}$ denotes the Tx/Rx array manifold vector (recall shorthand `$\xx$' denotes Tx or Rx), and $\bm{s}(t)\in\mathbb{C}^{N_{\tx}}$ is a spatial snapshot of the transmitted waveform at time $t$. Term $\bm{n}(t)\in\mathbb{C}^{N_{\rx}}$ models noise, interference, and clutter. 

To simplify presentation, we will henceforth focus on a angle-only version of \eqref{eq:model_general}. This restriction, commonly employed in for instance MIMO radar \cite{bekkerman2006target,li2007mimoradar}, is justified by the fact that the array geometry, which is of main interest here, primarily affects the sensing performance in the angular domain. For generalizations of \eqref{eq:model_general} to frequency-selective models, see, e.g., \cite[Ch.~5]{li2009mimo}.

\subsubsection{Simplified angle-only model}
Suppose the $K$ scatterers lie in the same delay-Doppler bin, and that the only unknown scatter parameters are the scattering coefficients $\bm{\gamma}\in\mathbb{C}^{K}$ and azimuth angles $\bm{\theta}\in[-\pi/2,\pi/2)^{K}$. Sufficient isolation between delay-Doppler bins can be achieved by proper waveform design and space-time processing. Setting the delay and Doppler parameters to $\tau_k=\omega_k=0$ $\forall k$, and sampling the array output at $T\in\mathbb{N}_+$ time-instances, \eqref{eq:model_general} yields the following $N_{\tx}\times T$ discrete-time model \cite{bekkerman2006target}
\begin{align}
	\bm{Y}
	=\sum_{k=1}^K\gamma_k \bm{a}_{\rx}(\theta_k)\bm{a}_{\tx}^{\T}(\theta_k)\bm{S}^{\T}+\bm{N}
	=\bm{A}_{\rx}\diag(\bm{\gamma})\bm{A}_{\tx}^{\T}\bm{S}^{\T}+\bm{N}.\label{eq:Y}
\end{align}
Here, $\bm{A}_{\xx}=\bm{A}_{\xx}(\bm{\theta})=[\bm{a}_{\xx}(\theta_1),\ldots,\bm{a}_{\xx}(\theta_K)]\in\mathbb{C}^{N_{\xx}\times K}$ is the narrowband Tx/Rx array manifold matrix modeling the relative phase shifts between the sensors induced by the planar wavefronts departing from/impinging on the Tx/Rx array.  The $(n,k)$th entry of $\bm{A}_{\xx}$ depends on the $n$th normalized Tx/Rx sensor position $d_{\xx}[n]\in\mathbb{D}_{\xx}$, $n\in[1\separator N_{\xx}]$, and the $k$th scatterer angle $\theta_k\in[-\pi/2,\pi/2)$, $k\in[1\separator K]$:
\begin{align}
	[\bm{A}_{\xx}]_{n,k}=\exp(j\pi d_{\xx}[n] \sin \theta_k). \label{eq:A_xx}
\end{align}
Moreover, $\bm{S}=[\bm{s}(t_1),\ldots, \bm{s}(t_T)]^{\T}\in\mathbb{C}^{T\times N_{\tx}}$ is a spatio-temporal Tx \emph{waveform matrix}, whose $n$th column corresponds to the waveform transmitted by the $n$th Tx sensor sampled at times $\{t_i\}_{i=1}^T$. For simplicity, we assume the entries of $\bm{N}\in\mathbb{C}^{N_{\rx}\times T}$ are i.i.d. complex-valued zero-mean Gaussian random variables of equal variance $\sigma^2$, i.e., $N_{n,t}\sim\mathcal{CN}(0,\sigma^2)$. For more general clutter and interference models, see, e.g., \cite[Ch.~6]{li2009mimo}.

\cref{eq:Y} can be vectorized and expressed using the Kronecker product $\kron$ and Khatri-Rao (columnwise Kronecker) product $\krao$ as  $\bm{y}\triangleq \vecm(\bm{Y})\in\mathbb{C}^{TN_{\rx}}$:
\begin{align}
	\bm{y}
	=((\bm{S}\bm{A}_{\tx})\krao \bm{A}_{\rx})\bm{\gamma}+\bm{n}
	=(\bm{S}\kron\bm{I}_{N_{\rx}})(\bm{A}_{\tx}\krao \bm{A}_{\rx})\bm{\gamma}+\bm{n}. \label{eq:y_eff}
\end{align}
Here, $\bm{\gamma}\in\mathbb{C}^K$ is the vector of scattering coefficients, $\bm{n}\triangleq\vecm(\bm{N})$ is the vectorized noise matrix, and $\bm{I}_{N_{\rx}}$ is the $N_{\rx}\times N_{\rx}$ identity matrix. \cref{eq:y_eff} follows from the following Kronecker/Khatri-Rao product identities: $\vecm(\bm{H}\diag(\bm{f})\bm{G}^{\T})=(\bm{G}\krao\bm{H})\bm{f}$ and $(\bm{G}\bm{H})\krao(\bm{F}\bm{D})=(\bm{G}\kron\bm{F})(\bm{H}\krao\bm{D})$ for matrices of appropriate dimensions---see \cite[Theorem~2]{liu2008hadamard} for details.

\subsubsection{Waveform matrix}
Waveform matrix $\bm{S}$ crucially impacts the performance of the active sensing system. For example, the choice $\bm{S}$ affects the realized transmit combining gain, the identifiability of the unknown model parameters $\bm{\theta}$ and $\bm{\gamma}$ in \eqref{eq:y_eff}, and the fidelity to which they can be recovered in the presence of noise and interference. While \cref{sec:coarray_beamforming} elaborates on the role of $\bm{S}$ in beamforming, here we only briefly discuss some basic properties of $\bm{S}$. The reader interested in waveform design is referred to the rich literature on the topic; for example, see \cite{bell1993information,levanon2004radar,pillai2011waveform,he2012waveform}.

\paragraph{Waveform rank}
A fundamental property of $\bm{S}$ is its rank, $N_s\triangleq \rank(\bm{S})$. The canonical example of a \emph{unit} rank system ($N_s=1$) is single-input-multiple-output (SIMO)---also known as phased array---radar \cite[Ch.~8]{skolnik1981introduction}, which corresponds to transmitting a single appropriately phased shifted (delayed) and possibly scaled waveform from each Tx sensor. In contrast, full waveform rank ($N_s=N_{\tx}$) is usually employed in MIMO radar \cite{bliss2003multiple,li2007mimoradar}, which transmits linearly independent (typically orthogonal) waveforms from each of the $N_{\tx}$ transmitters. Typically, orthogonal waveforms are used when the radar is in a target search mode. An intermediate waveform rank ($N_s\in[2\separator N_{\tx}-1]$) usually trades off field of view for coherent transmit combining gain \cite{aittomaki2007signal,fuhrmann2010signaling,hassanien2010phasedmimo}. The appropriate choice of $N_s$ is therefore task-dependent. For example, $N_s=1$ may be desirable when focusing energy on single target for tracking purposes, whereas $N_s=N_{\tx}$ is suitable when searching for new targets without \emph{a priori} knowledge of their directions. Increasing $N_s$ also facilitates resolving more targets simultaneously, extending the set of achievable beampatterns, as well as illuminating broader angular intervals while mitigating interferers.

Waveform rank may also be constrained by hardware costs, power consumption, or the available computational resources. For example, full waveform rank ($N_s=N_{\tx}$) can be infeasible due to the required number of expensive and power hungry radio and intermediate-frequency (RF-IF) front-ends at the transmitter. Waveform rank may also affect the hardware complexity or computational burden of the receiver. For instance, a dedicated filter is required for each transmitted waveform if matched or mismatched filtering is employed.

\paragraph{Practical waveform constraints}
The transmit power of the sensing system is always limited in practice. Typical power constraints include the total power constraint $\|\bm{S}\|_{\F}^2\leq \zeta > 0$ or the per sensor power constraint $\|\bm{S}_{:,n}\|_2^2\leq \zeta/N_{\tx} > 0\ \forall n$. Constant modulus waveforms $|S_{t,n}|=c>0\ \forall (t,n)$ are also often favored because a high peak-to-average power ratio leads to the inefficient use of amplifiers. This is especially the case in radar applications, where high Tx power is needed to achieve a desired SNR and coverage over a large surveillance volume. Another key property of the system determined by the waveform is the delay-Doppler response, which is often desired to closely approximate the ideal thumbtack function.

\subsection{Virtual array model}\label{sec:virtual_model}
This section introduces the canonical virtual array model known as the sum co-array \cite{hoctor1990theunifying}. The sum co-array consists of the pairwise sums of Tx-Rx sensor position pairs, and arises naturally in active sensing. 
\begin{definition}[Sum co-array]
	The sum co-array $\mathbb{D}_{\Sigma}$ is defined as
	\begin{align*}
		\mathbb{D}_\Sigma \triangleq  \mathbb{D}_{\tx}+\mathbb{D}_{\rx} = \{d_{\tx}+d_{\rx}\ |\ d_{\tx}\in\mathbb{D}_{\tx}; d_{\rx}\in\mathbb{D}_{\rx} \},
	\end{align*}
	where $\mathbb{D}_{\tx}$ and $\mathbb{D}_{\rx}$ denote the set of Tx and Rx sensor positions, respectively. The cardinality of the sum co-array is denoted by $N_\Sigma\triangleq |\mathbb{D}_\Sigma|$.
\end{definition}
The sum co-array is closely related to the difference co-array, which emerges under the assumption of uncorrelated source signals and hence finds wide use in passive sensing  (see Chapter~1). Recall that the difference co-array (or difference set) of array configuration $\mathbb{D}$ is the set of pairwise sensor position differences $\mathbb{D}_\Delta\triangleq \mathbb{D}-\mathbb{D}=\{d_1-d_2\ |\ d_1,d_2\in \mathbb{D}\}$. Unlike the difference co-array, the sum co-array does not require any statistical assumptions (on the scattering coefficients) to emerge, but is rather a consequence of the co-locatedness and phase coherence of the Tx and Rx arrays.

Similarly to the physical array, a virtual steering matrix $ \bm{A}_\Sigma\in\mathbb{C}^{N_\Sigma \times K} $ can be associated with the sum co-array. The  $ (\ell,k) $th entry of this matrix is
\begin{align}
[\bm{A}_\Sigma]_{\ell,k} = \exp(j\pi d_\Sigma[\ell] \sin\theta_k).\label{eq:A_Sigma}
\end{align}
Matrix $\bm{A}_\Sigma$ is related to the effective Tx-Rx array steering matrix $\bm{A}_{\tx}\krao\bm{A}_{\rx}$ as
\begin{align}
\bm{A}_{\tx}\krao\bm{A}_{\rx}=\bm{\Upsilon}^{\T}\bm{A}_\Sigma, \label{eq:A_krao_A_Sigma}
\end{align}
where $\bm{\Upsilon}\in\{0,1\}^{N_\Sigma \times N_{\tx}N_{\rx}}$ is a \emph{redundancy pattern matrix} \cite{rajamaki2020hybrid}. This binary matrix characterizes the virtual sensor multiplicities. Specifically, it forms a one-to-many map from the unique elements of the sum co-array to the corresponding physical Tx-Rx sensor pairs.
\begin{definition}[Redundancy pattern]
The $ (\ell,i)$th entry of the binary \emph{redundancy pattern matrix} $ \bm{\Upsilon}\in\{0,1\}^{N_\Sigma\times N_{\tx}N_{\rx}} $ is
\begin{align}
	\Upsilon_{\ell,i}
	\triangleq
	\begin{cases}
		1,&\txt{if }d_{\tx}\big[\lceil i/N_{\rx} \rceil\big]+d_{\rx}\big[i-(\lceil i/N_{\rx}\rceil\!-\!1) N_{\rx}\big]\!=\!d_\Sigma[\ell]\\
		0,&\txt{otherwise.}
	\end{cases}
\label{eq:Upsilon}
\end{align}
Here, $d_{\xx}[n]\in\mathbb{D}_{\xx}$ is the $n$th Tx/Rx sensor position and $d_{\Sigma}[\ell]\in\mathbb{D}_{\Sigma}$ is the $\ell$th sum co-array element position. 
\end{definition}
Matrix $\bm{\Upsilon}$ satisfies by definition $\bm{\Upsilon}\bm{\Upsilon}^{\T}=\diag(\bm{\upsilon}_\Sigma)$, where $\bm{\upsilon}_\Sigma\in\mathbb{N}_+^{N_\Sigma}$ is positive integer-valued vector containing the multiplicities of the unique virtual sensors. We assume without loss of generality that the virtual sensor positions $\mathbb{D}_\Sigma= \{d_\Sigma[\ell]\}_{\ell=1}^{N_\Sigma}$ and physical Tx/Rx sensor positions $\mathbb{D}_{\xx}= \{d_{\xx}[n]\}_{n=1}^{N_{\xx}}$ are indexed in ascending order, with their first sensors at the origin, i.e., 
\begin{align}
		d_{\xx}[1]=0<d_{\xx}[2]<\ldots<d_{\xx}[N_{\xx}] \txt{   and   }
		d_\Sigma[1]=0<d_\Sigma[2]<\ldots<d_\Sigma[N_\Sigma].
\label{eq:ordering}
\end{align}
A uniform sum co-array $\mathbb{D}_\Sigma=[0,N_\Sigma-1]$ is also called \emph{contiguous}. A key advantage of a contiguous co-array is that the co-array manifold $\bm{A}_\Sigma$ is a Vandermonde matrix, which guarantees the identifiability of up to $K\propto N_\Sigma$ scatterers---regardless of their configuration, provided the angles are unique.
\begin{example}[Redundancy pattern of ULA]\label{ex:Upsilon}
	Consider the uniform linear array (ULA) $\mathbb{D}_{\tx}=\mathbb{D}_{\rx}=\{0,1\}$ with $N_{\tx}=N_{\rx}=2$ Tx/Rx sensors. The sum co-array is contiguous and given by $\mathbb{D}_\Sigma=\{0,1,2\}$. Consequently, by \eqref{eq:Upsilon}
	the redundancy pattern matrix $\bm{\Upsilon}$ and multiplicity vector $\bm{\upsilon}_\Sigma=\bm{\Upsilon}\bm{\Upsilon}^{\T}\bm{1}$ are
\begin{align*}
	\bm{\Upsilon}=
	\begin{bmatrix}
		1		  &\zero&\zero &\zero\\
		\zero&1&1	     &\zero\\
		\zero&\zero &\zero  &1\\
	\end{bmatrix}
	\quad \txt{and}\quad
	\bm{\upsilon}_\Sigma =
	\begin{bmatrix}
		1\\
		2\\
		1
	\end{bmatrix}.
\end{align*}
The second row of $\bm{\Upsilon}$ has two nonzero entries, since the second virtual sensor (corresponding to element $1$) can be represented as the sum of the first Tx sensor and the second Rx sensor or vice versa, i.e., $1=0+1=1+0$ .
\end{example}
\cref{eq:Upsilon,eq:A_Sigma} enable expressing the measurement vector in \eqref{eq:y_eff} as
\begin{align}
	\bm{y}
	=\underbrace{(\bm{S}\kron\bm{I})\bm{\Upsilon}^{\T}}_{\triangleq \bm{C}}\bm{A}_\Sigma\bm{\gamma}+\bm{n}
	\label{eq:y_sca}
\end{align}
This reformulation allows re-interpreting \eqref{eq:y_eff} as a virtual array measurement model. Specifically, if the waveform rank is $N_s< N_{\Sigma}/N_{\rx}$, then $\bm{y}$ is a \emph{compressed} measurement of the virtual sensor outputs of the sum co-array. The compression is due to the potentially rank-deficient matrix $\bm{C}\triangleq (\bm{S}\kron\bm{I})\bm{\Upsilon}^{\T}\in\mathbb{C}^{TN_{\rx}\times N_\Sigma}$ satisfying $\rank(\bm{C})\leq \min(N_sN_{\rx},N_\Sigma)$. Note that $\bm{C}$ can be tall, yet column-rank deficient, since long waveforms $T\gg N_{\tx}$ may be employed in practice. This is typically the case when using phase-modulated radar codes and pulse compression. Matrix $\bm{C}$ also reveals that the waveform and array geometry interact via redundancy pattern matrix $\bm{\Upsilon}$. This interpretation and its implications remain to be fully explored in future work.

\section{Sparse array configurations}\label{sec:geometry}

This section presents examples of sparse array configurations for active sensing. \cref{sec:categories} starts by categorizing active array geometries based on the number of shared sensors between the Tx and Rx arrays. We focus on geometries with fully overlapping Tx and Rx arrays (shared transceivers) that maximize the size of the sum co-array (assumed contiguous) for a given number of physical sensors. \cref{sec:mra} discusses the class of Minimum-redundancy arrays, which are optimal in this regard. However, finding these array configurations typically requires solving a combinatorial optimization problem, which is impractical even when the number of physical sensors is well below a hundred. Therefore, \cref{sec:scalable} introduces scalable, but possibly suboptimal configurations, whose sum co-array is nevertheless contiguous and grows order-wise optimally in the number of sensors. As a bonus, these geometries are symmetric by construction, which implies that both the sum and difference co-array is contiguous, and hence the arrays are suitable for both active and passive sensing. For active sparse array design approaches beyond the sum co-array perspective adopted herein, see, e.g., Chapter 10 and references therein.

\subsection{Categorization of array configurations based on overlap between Tx and Rx arrays}\label{sec:categories}

A meaningful definition of the MRA requires considering any constraints imposed on the admissible sensor positions, as in active sensing, one may or may not have the freedom to choose the Rx and Tx sensor positions independently. In particular, the number of distinct physical sensors (either transmitting, receiving or transceiving), $ |\mathbb{D}_{\tx}\cup\mathbb{D}_{\rx}| $, may be less than the total number of transmitters and receivers, $ |\mathbb{D}_{\tx}|+|\mathbb{D}_{\tx}| $. These two quantities, together with the number of shared sensors (transceivers), $|\mathbb{D}_{\tx}+\mathbb{D}_{\rx}|$, satisfy
\begin{align*}
	|\mathbb{D}_{\tx}|+|\mathbb{D}_{\rx}|=|\mathbb{D}_{\tx}\cup\mathbb{D}_{\rx}|+|\mathbb{D}_{\tx}\cap\mathbb{D}_{\rx}|.
\end{align*}
Depending on the fraction of sensors shared by the Tx and Rx arrays, active array configurations naturally fall into one of the following three categories:
\begin{enumerate}[label=(\alph*)]
	\item \emph{Fully overlapping} Tx and Rx arrays, $ \mathbb{D}_{\tx}=\mathbb{D}_{\rx}=\mathbb{D}$\label{i:fully_overlapping}
	\item \emph{Partially overlapping} Tx and Rx arrays $\{0\}\subset \mathbb{D}_{\tx}\cap\mathbb{D}_{\rx} \subset\mathbb{D}_{\tx}\cup\mathbb{D}_{\rx}$\label{i:partially_overlapping}
	\item \emph{Non-overlapping} Tx and Rx arrays, $ \mathbb{D}_{\tx}\cap\mathbb{D}_{\rx} = \{0\} $.\label{i:non-overlapping}
\end{enumerate}
\cref{fig:overlap} illustrates the three cases. We will mainly focus on \labelcref{i:fully_overlapping} herein. Note that the Tx and Rx arrays have at least one common sensor (at the origin) due to the adopted ordering convention in \eqref{eq:ordering}. Although the Tx and Rx arrays obviously need not share any sensors in practice, the normalized Tx-Rx array configuration satisfying \eqref{eq:ordering} will fall into one of the above categories \labelcref{i:fully_overlapping,i:partially_overlapping,i:non-overlapping}.

The total number of distinct sensors satisfies $ |\mathbb{D}_{\tx}\cup\mathbb{D}_{\rx}| \in[\max(|\mathbb{D}_{\tx}|,|\mathbb{D}_{\rx}|)\separator |\mathbb{D}_{\tx}|+|\mathbb{D}_{\rx}|-1]$, which in the fully overlapping and non-overlapping cases simplifies to
\begin{align*}
	|\mathbb{D}_{\tx}\cup\mathbb{D}_{\rx}| = 
	\begin{cases}
		|\mathbb{D}|,& \txt{if } \mathbb{D}_{\tx}=\mathbb{D}_{\rx}=\mathbb{D} \txt{ (fully overlapping)}\\
				|\mathbb{D}_{\tx}|+|\mathbb{D}_{\rx}|-1, & \txt{if } \mathbb{D}_{\tx}\cap\mathbb{D}_{\rx} = \{0\} \txt{ (non-overlapping)}.
	\end{cases}
\end{align*}
Fully overlapping transmitters and receivers are typically employed in pulsed systems, such as phased array radar \cite[p.~5]{skolnik1981introduction} and medical ultrasound \cite{macovski1979ultrasonic}. Non-overlapping Tx and Rx arrays are common in applications such as full-duplex communications \cite{everett2016softnull} or automotive MIMO radar \cite{patole2017automotive} employing continuous wave transmission (or even a bistatic Tx-Rx array configuration). Non-overlapping array configurations have the advantage of achieving a completely nonredundant and contiguous sum co-array, as we will see in \cref{sec:mra_known}. However, the physical apertures of the resulting Tx and Rx arrays can be significantly different, which may be impractical when the physical area for placing the sensors is strictly limited. In contrast, fully overlapping configurations have---by definition---the same physical Tx and Rx array aperture, which is exactly half the virtual aperture of the sum co-array, as illustrated in \cref{fig:overlap}.
\begin{figure}
	\centering
	\newcommand{\Na}{3}   
	\newcommand{\Nb}{3}
	\subfloat[Fully overlapping]{    
		\begin{tikzpicture} 
			\begin{axis}[width=5 cm,height=3 cm,ytick={-1,0,1},yticklabels={$\mathbb{D}_{\Sigma}$,$\mathbb{D}_{\tx}$,$\mathbb{D}_{\rx}$},xmin=-0.2,xmax=\Na*\Nb-1+0.2,ymin=-1.5,ymax=1.5,xtick={0,1,...,\Na*\Nb-1},title style={yshift=0 pt},xticklabel shift = 0 pt,xtick pos=bottom,ytick pos=left,axis line style={draw=none},]
					\addplot[blue,only marks,mark=square*,mark size=2.5] coordinates {
					(0,1)
					(1,1)
					(3,1)
					(4,1)
				};
			\addplot[orange,only marks,mark=star,mark size=3.5,very thick] coordinates {
				(0,0)
				(1,0)
				(3,0)
				(4,0)
			};
				\addplot[only marks,mark=*,mark size=2.5] expression [domain=0:8, samples=9] {-1};
			\end{axis}
		\end{tikzpicture}\label{fig:array_fo}
	}\hfill
	\subfloat[Partially overlapping]{    
		\begin{tikzpicture} 
			\begin{axis}[width=5 cm ,height=3 cm,ytick={},yticklabels={},xmin=-0.2,xmax=\Na*\Nb-1+0.2,ymin=-1.5,ymax=1.5,xtick={0,1,...,\Na*\Nb-1},title style={yshift=0 pt},xticklabel shift = 0 pt,xtick pos=bottom,axis line style={draw=none},ymajorticks=false]
				\addplot[blue,only marks,mark=square*,mark size=2.5] coordinates {
					(0,1)
					(1,1)
					(2,1)
					(4,1)
				};
				\addplot[orange,only marks,mark=star,mark size=3.5,very thick] coordinates {
					(0,0)
					(1,0)
					(3,0)
					(4,0)
				};
				\addplot[only marks,mark=*,mark size=2.5] expression [domain=0:8, samples=9] {-1};
			\end{axis}
		\end{tikzpicture}\label{fig:array_pa}
	}\hfill
\subfloat[Non-overlapping]{    
	\begin{tikzpicture} 
		\begin{axis}[width=5 cm ,height=3 cm,ytick={},yticklabels={},xmin=-0.2,xmax=\Na*\Nb-1+0.2,ymin=-1.5,ymax=1.5,xtick={0,1,...,\Na*\Nb-1},title style={yshift=0 pt},xticklabel shift = 0 pt,xtick pos=bottom,axis line style={draw=none},ymajorticks=false]
			\addplot[blue,only marks,mark=square*,mark size=2.5] expression [domain=0:(\Na-1), samples=\Na] {1};
			\addplot[orange,only marks,mark=star,mark size=3.5,very thick] expression [samples at ={0,\Na,...,\Na*(\Nb-1)}] {0};
			\addplot[only marks,mark=*,mark size=2.5] expression [domain=0:(\Na*\Nb-1), samples=\Na*\Nb] {-1};
		\end{axis}
	\end{tikzpicture}\label{fig:array_no}
}
		\caption{Categorization of (aligned) active array configurations based on the overlap between Tx and Rx sensors. The fully overlapping geometry in \protect\subref{fig:array_fo} achieves a contiguous sum co-array of exactly twice the physical Tx/Rx array aperture. The non-overlapping configuration in \protect\subref{fig:array_no} attains the same co-array using fewer physical sensors at the expense of a larger physical array aperture.}\label{fig:overlap}
\end{figure}
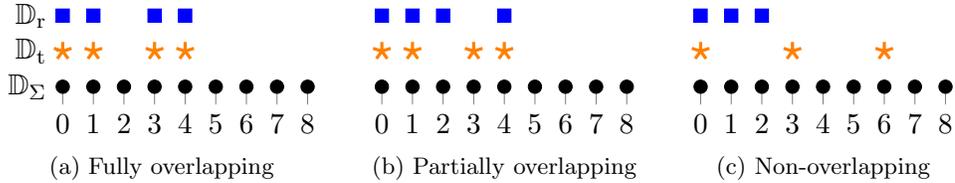

We remark that the extent to which sensors are shared between the Tx and Rx arrays is a physical property of the array. This notion of array overlap is therefore conceptually different from array partitioning or sub-array processing schemes used in, e.g., spatial smoothing \cite{evans1982application}, ESPRIT \cite{roy1989esprit}, multifunction radar \cite{stailey2016multifunction}, full-duplex communications \cite{everett2016softnull}, or joint radar and communications systems \cite{liu2018mumimo}.

\subsection{Minimum-redundancy array (MRA)} \label{sec:mra}
A large contiguous sum co-array is desirable for achieving a high angular resolution and unambiguously resolving as many scatterers as possible. Array processing is also greatly simplified if the sum co-array is contiguous, or if it contains a large contiguous subarray. Hence, it is desirable to maximize the size of the (contiguous) co-array for a given number of physical sensors. The \emph{Minimum-redundancy array} (MRA) \cite{moffet1968minimumredundancy,hoctor1996arrayredundancy} is optimal in this regard, since redundancy quantifies how efficiently the sensors of the physical array are utilized toward constructing a co-array with unique contiguous virtual sensors. Originally defined for passive arrays and the difference co-array \cite{moffet1968minimumredundancy}, the concept of redundancy was later extended to active arrays and the sum co-array \cite{hoctor1996arrayredundancy}. See also \cite{chen2008minimumredundancymimo,weng2011nonuniform} for an extension to the ``difference-sum'' co-array, which arises in active sensing when scatterers are uncorrelated \cite{boudaher2015sparsitybased}. Herein, we focus on the sum co-array-based MRA for active sensing. \cref{sec:redundancy,sec:mra_def} formally define the concepts of redundancy and the MRA, respectively, whereas \cref{sec:mra_known} briefly presents some known MRAs and discusses their structure.

\subsubsection{Redundancy}\label{sec:redundancy}
\emph{Redundancy} quantifies the repetition of the virtual sensors in the co-array. The redundancy of any array configuration with a contiguous sum co-array can be defined as \cite[Eqs.~(7) and (8)]{hoctor1996arrayredundancy}
\begin{align*}
	R(\mathbb{D}_{\tx},\mathbb{D}_{\rx})\triangleq \begin{cases}
		\frac{1}{2}|\mathbb{D}|(|\mathbb{D}|+1)/|\mathbb{D}+\mathbb{D}|,& \txt{if } \mathbb{D}_{\tx} = \mathbb{D}_{\rx}=\mathbb{D}\\
				|\mathbb{D}_{\tx}||\mathbb{D}_{\rx}|/|\mathbb{D}_{\tx}+\mathbb{D}_{\rx}|,& \txt{otherwise.}
	\end{cases}
\end{align*}
An array is said to be \emph{non-redundant} if $ R=1 $, and \emph{redundant} if $ R> 1 $. In the case of fully overlapping Tx and Rx arrays ($\mathbb{D}_{\tx} = \mathbb{D}_{\rx}=\mathbb{D}$), the only two non-redundant array configurations are the trivial arrays $\mathbb{D}=\{0\}$ and $\mathbb{D}=\{0,1\}$ \cite[Corollary~1]{rajamaki2021sparsesymmetric}. A more general definition of redundancy applicable to arrays with a noncontiguous co-array is also possible \cite{moffet1968minimumredundancy,rajamaki2021sparsesensor}. In the active case, this definition can be refined further to account for the exact overlap between the Tx and Rx arrays \cite{rajamaki2021sparsesensor}. This facilitates a more informative comparison between the redundancies of configurations with different degrees of overlap.

In the fully overlapping case, it is often of interest to evaluate the \emph{asymptotic redundancy}. The asymptotic redundancy of a given class $\mathscr{C}$ of array geometries (with fully overlapping Tx/Rx sensors) is defined as
\begin{align}
	R_\infty(\mathscr{C}) \triangleq \lim_{\substack{|\mathbb{D}|\to\infty\\ \mathbb{D}\in\mathscr{C}}}R(\mathbb{D},\mathbb{D})=\lim_{\substack{|\mathbb{D}|\to\infty\\ \mathbb{D}\in\mathscr{C}}}\frac{|\mathbb{D}|(|\mathbb{D}|+1)}{2(|\mathbb{D}+\mathbb{D}|)}.\label{eq:R_inf}
\end{align}
This quantity is finite only for sparse array configuration achieving a (contiguous) sum co-array of size $|\mathbb{D}+\mathbb{D}|\propto|\mathbb{D}|^2$. Consequently, the asymptotic redundancy of the ULA is infinite, since $|\mathbb{D}+\mathbb{D}|\propto|\mathbb{D}|$. The asymptotic redundancy of any array with a contiguous sum co-array is lower bounded by $R_\infty(\mathscr{C}_\txt{MRA})$, where $\mathscr{C}_\txt{MRA}$ is the class of MRAs, defined in the next subsection.

\subsubsection{Definition of MRA for active sensing}\label{sec:mra_def}

Moffet first proposed the MRA in the context of passive linear arrays and the difference co-array \cite{moffet1968minimumredundancy}. He distinguished between the \emph{general} and \emph{restricted} solutions, where the latter constrains the difference co-array to be contiguous. The restricted MRA thus yields the maximum number of contiguous co-array elements for a given physical aperture. Note that a general MRA can, but need not be restricted---for details see \cite{moffet1968minimumredundancy,rajamaki2021sparsesymmetric} and \cite[pp.~108--109]{rajamaki2021sparsesensor}. Hoctor and Kassam later extended the MRA to active sensing \cite{hoctor1996arrayredundancy}. They specifically considered the restricted MRA with a contiguous sum co-array. Passive and active MRAs correspond to extremal difference bases \cite{redei1949representation,leech1956ontherepresentation,wichmann1963anote} and extremal additive bases \cite{rohrbach1937beitrag,challis1993two,robinson2009some,kohonen2014meet}, respectively. Additive bases have been studied in number theory since the early 1900s and are informally known as solutions to the \emph{postage stamp problem} \cite{lunnon1969postage,mossige1981algorithms}. The canonical version of this problem seeks a set of integers that can represent each non-negative integer smaller than some $ h\in\mathbb{N} $ using the sum of at most $ k $ integers. The restricted postage stamp basis with maximal $ h $ for $ k=2 $, or extremal restricted 2-basis, coincides with the active MRA with fully overlapping Tx and Rx sensors.

For a given number of transmitters $N_{\tx}$, receivers $N_{\rx}$, and transceivers $N$, the MRA can be defined as a (not necessarily unique) solution to \cite{rajamaki2021sparsesensor}
\begin{align}
	\begin{aligned}
		\underset{\mathbb{D}_{\xx}\subset\mathbb{N};\ h\in\mathbb{N}}{\text{maximize}}\ h\ \
		\text{ subject to}\ \ & \mathbb{D}_{\tx}+\mathbb{D}_{\rx}= [0:h-1]\\
		& |\mathbb{D}_{\tx}|=N_{\tx},\  |\mathbb{D}_{\rx}|=N_{\rx}, \txt{ and } |\mathbb{D}_{\tx}\cap \mathbb{D}_{\rx}| = N. 
	\end{aligned}\label{p:MRA_gen}
\end{align}
\cref{p:MRA_gen} subsumes the conventional definition of the (restricted) active MRA in the case of fully overlapping Tx and Rx arrays \cite{hoctor1996arrayredundancy}; If $N_{\tx}=N_{\rx}=N$, then \eqref{p:MRA_gen} reduces to \cite{rajamaki2021sparsesymmetric}
\begin{align}
	\underset{\mathbb{D}\subset \mathbb{N}, h\in\mathbb{N}}{\text{maximize}}\ h\quad
	\text{subject to}\quad \mathbb{D}+\mathbb{D} = [0:h-1],\quad |\mathbb{D}|=N. \label{p:MRA_fo}
\end{align}
The fully overlapping and non-overlapping array configurations in \cref{fig:array_fo,fig:array_no} are examples of MRAs for $N_{\tx}=N_{\rx}=N=4$ and $N_{\tx}=N_{\rx}=N+2=3$, respectively. The fully overlapping MRA is redundant ($R=1.25$), whereas the non-overlapping array is non-redundant ($R=1$).

The MRA can alternatively be defined for a constrained physical aperture \cite{blanton1991newsearch,schwartau2021large} or virtual aperture \cite{rajamaki2021sparsesensor}. In the former case, the MRA is also known as the \emph{sparse ruler} \cite{shakeri2012directionofarrival,romero2016compressive}. This definition of the MRA is convenient when only a limited area for placing the sensors is available, or when the desired physical/virtual array is \emph{not} one-dimensional. For example, constraining the aspect ratio of the planar or volumetric MRA \cite{pumphrey1993design,meurisse2001bounds,kohonen2018planaradditive} precludes degenerate linear arrays as solutions, see \cite[pp.~57--60]{rajamaki2021sparsesensor}.

\subsubsection{Known MRAs} \label{sec:mra_known}
\paragraph{Non-overlapping Tx and Rx sensors}

Problem \eqref{p:MRA_gen} has a simple closed-form solution in the case of non-overlapping Tx and Rx arrays. This non-redundant array, illustrated in \cref{fig:array_no}, has a nested structure consisting of a dense and a sparse ULA. Specifically, the Tx and Rx sensor positions sets are
\begin{align}
		\mathbb{D}_{\tx} = \{0,N_{\rx},\ldots,(N_{\tx}-1)N_{\rx}\}\quad \txt{and}\quad \mathbb{D}_{\rx} = \{0,1,\ldots,N_{\rx}-1\}.\label{eq:MRA_no_nst}
\end{align}
The nested structure in \eqref{eq:MRA_no_nst} is readily verified to be a solution to \eqref{p:MRA_gen}; the upper bound on the size of any contiguous sum co-array $|\mathbb{D}_{\tx}+\mathbb{D}_{\rx}|\leq |\mathbb{D}_{\tx}| |\mathbb{D}_{\rx}|$ is attained since $\mathbb{D}_{\tx}+\mathbb{D}_{\rx}=[0\separator N_{\tx}N_{\rx}-1]$. The definitions of $\mathbb{D}_{\tx}$ and $\mathbb{D}_{\rx}$ may naturally be reversed  in \eqref{eq:MRA_no_nst} without affecting this conclusion.

\cref{eq:MRA_no_nst} is well-known \cite{lockwood1996optimizing1d,li2007mimoradar,cohen2018optimized}, and closely related to the (passive) nested array \cite{pal2010nested}, which also leverages arithmetic sequences to ensure a contiguous (difference) co-array. We note that many non-overlapping MRA configurations exist in the two-dimensional case \cite[pp.~59--60]{rajamaki2021sparsesensor}.

\paragraph{Fully overlapping Tx and Rx sensors}\label{sec:mra_known_fo}

Finding MRAs with fully overlapping Tx and Rx sensors is generally challenging due to the combinatorial nature of \eqref{p:MRA_fo}. Solutions with up to $ |\mathbb{D}|=48$ physical sensors---corresponding to a contiguous sum co-array with $ |\mathbb{D}_\Sigma|=734 $ virtual sensors---have been found using branch-and-bound methods in the context of additive bases \cite{kohonen2015early}. The large search space of the resulting problem is tackled by employing judicious pruning strategies and concatenating solutions of a smaller-sized auxiliary combinatorial problem \cite{kohonen2014meet,kohonen2015early}. \cref{fig:mra_kohonen} illustrates all MRAs with $N=11$ physical sensors (transceivers) \cite{hoctor1996arrayredundancy}. Note that four out of six configurations are symmetric with respect to their mid point (position $11$). In fact, most known MRAs are symmetric; there exist at least one symmetric MRA for each $N\leq 48$ \cite{kohonen2014meet,kohonen2015early}. We emphasize, however, that the MRA in \eqref{p:MRA_fo} is \emph{not} constrained to be symmetric.
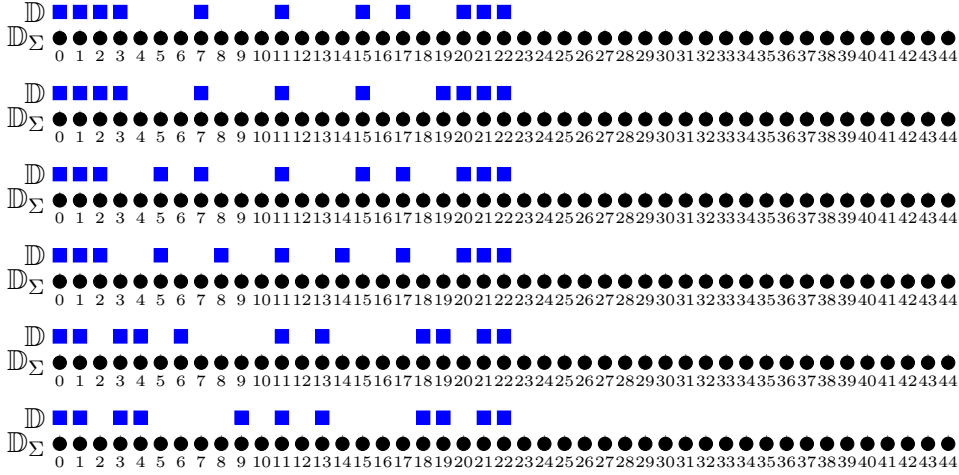
\begin{figure}
 \newcommand{\figwidth}{13.5}
	\newcommand{\Lmra}{22}
	\pgfplotstableread[header=has colnames]{
		x 	y
		0  0
		1 0
		2 0
		3 0
		7 0
		11 0
		15 0
		17 0
		20 0
		21 0
		22 0
	}\MRAa
	\pgfplotstableread[header=has colnames]{
		x	y
		0  0
		1 0
		2 0
		3 0
		7 0
		11 0
		15 0
		19 0
		20 0
		21 0
		22 0
	}\MRAb
	\pgfplotstableread[header=has colnames]{
	x	y
	0  0
	1 0
	2 0
	5 0
	7 0
	11 0
	15 0
	17 0
	20 0
	21 0
	22 0
	}\MRAc
\pgfplotstableread[header=has colnames]{
	x	y
	0  0
	1 0
	2 0
	5 0
	8 0
	11 0
	14 0
	17 0
	20 0
	21 0
	22 0
}\MRAd
\pgfplotstableread[header=has colnames]{
	x	y
	0  0
	1 0
	3 0
	4 0
	6 0
	11 0
	13 0
	18 0
	19 0
	21 0
	22 0
}\MRAe
\pgfplotstableread[header=has colnames]{
	x	y
	0  0
	1 0
	3 0
	4 0
	6 0
	11 0
	13 0
	18 0
	19 0
	21 0
	22 0
}\MRAe
\pgfplotstableread[header=has colnames]{
	x	y
	0  0
	1 0
	3 0
	4 0
	9 0
	11 0
	13 0
	18 0
	19 0
	21 0
	22 0
}\MRAf
\pgfplotsset{every x tick label/.append style={font=\tiny}}
	\centering  
		\begin{tikzpicture} 
			\begin{axis}[width=\figwidth cm,height=2 cm,ytick={-1,0},yticklabels={$\mathbb{D}_\Sigma$,$\mathbb{D}$},xmin=-0.2,xmax=2*\Lmra+0.2,xtick={0,1,...,2*\Lmra},title style={yshift=0 pt},xtick pos=bottom,ytick pos=left,axis line style={draw=none}]
				\addplot[blue,only marks,mark=square*,mark size=2.5] table {\MRAa};
				\addplot[only marks,mark=*,mark size=2.5] expression [domain=0:(2*\Lmra), samples=2*\Lmra+1] {-1};
			\end{axis}
		\end{tikzpicture}
\hfill
		\begin{tikzpicture} 
			\begin{axis}[width= \figwidth cm,height=2 cm,ytick={-1,0},yticklabels={$\mathbb{D}_\Sigma$,$\mathbb{D}$},xmin=-0.2,xmax=2*\Lmra+0.2,xtick={0,1,...,2*\Lmra},title style={yshift=0 pt},xtick pos=bottom,ytick pos=left,axis line style={draw=none}]
				\addplot[blue,only marks,mark=square*,mark size=2.5] table {\MRAb};
					\addplot[only marks,mark=*,mark size=2.5] expression [domain=0:(2*\Lmra), samples=2*\Lmra+1] {-1};
			\end{axis}
		\end{tikzpicture}
\hfill
		\begin{tikzpicture} 
	\begin{axis}[width= \figwidth cm,height=2 cm,ytick={-1,0},yticklabels={$\mathbb{D}_\Sigma$,$\mathbb{D}$},xmin=-0.2,xmax=2*\Lmra+0.2,xtick={0,1,...,2*\Lmra},title style={yshift=0 pt},xtick pos=bottom,ytick pos=left,axis line style={draw=none}]
		\addplot[blue,only marks,mark=square*,mark size=2.5] table {\MRAc};
		\addplot[only marks,mark=*,mark size=2.5] expression [domain=0:(2*\Lmra), samples=2*\Lmra+1] {-1};
	\end{axis}
\end{tikzpicture}
\hfill
		\begin{tikzpicture} 
	\begin{axis}[width= \figwidth cm,height=2 cm,ytick={-1,0},yticklabels={$\mathbb{D}_\Sigma$,$\mathbb{D}$},xmin=-0.2,xmax=2*\Lmra+0.2,xtick={0,1,...,2*\Lmra},title style={yshift=0 pt},xtick pos=bottom,ytick pos=left,axis line style={draw=none}]
		\addplot[blue,only marks,mark=square*,mark size=2.5] table {\MRAd};
		\addplot[only marks,mark=*,mark size=2.5] expression [domain=0:(2*\Lmra), samples=2*\Lmra+1] {-1};
	\end{axis}
\end{tikzpicture}
\hfill
\begin{tikzpicture} 
	\begin{axis}[width= \figwidth cm,height=2 cm,ytick={-1,0},yticklabels={$\mathbb{D}_\Sigma$,$\mathbb{D}$},xmin=-0.2,xmax=2*\Lmra+0.2,xtick={0,1,...,2*\Lmra},title style={yshift=0 pt},xtick pos=bottom,ytick pos=left,axis line style={draw=none}]
		\addplot[blue,only marks,mark=square*,mark size=2.5] table {\MRAe};
		\addplot[only marks,mark=*,mark size=2.5] expression [domain=0:(2*\Lmra), samples=2*\Lmra+1] {-1};
	\end{axis}
\end{tikzpicture}
\hfill
\begin{tikzpicture} 
	\begin{axis}[width= \figwidth cm,height=2 cm,ytick={-1,0},yticklabels={$\mathbb{D}_\Sigma$,$\mathbb{D}$},xmin=-0.2,xmax=2*\Lmra+0.2,xtick={0,1,...,2*\Lmra},title style={yshift=0 pt},xtick pos=bottom,ytick pos=left,axis line style={draw=none}]
		\addplot[blue,only marks,mark=square*,mark size=2.5] table {\MRAf};
		\addplot[only marks,mark=*,mark size=2.5] expression [domain=0:(2*\Lmra), samples=2*\Lmra+1] {-1};
	\end{axis}
\end{tikzpicture}
	\caption{Active MRAs with $N=11$ physical (transceiving) sensors \cite{hoctor1996arrayredundancy}. Most of the configurations are symmetric. In fact, there exists a symmetric MRA configuration for (at least) each $N\leq48$ \cite{kohonen2014meet,kohonen2015early}.}\label{fig:mra_kohonen}
\end{figure}

\subsection{Symmetric sparse array configurations}\label{sec:scalable}

The main drawback of the MRA is that finding it generally requires solving a combinatorial optimization problem. Consequently, generating MRAs with a large number of sensors or large aperture is usually impractical---especially when the Tx and Rx arrays are not non-overlapping. In contrast, if a sparse array configuration allows a parametric description, then its synthesis is simplified considerably. There exist \emph{scalable} and \emph{order-wise optimal} array geometries that can be generated for practically any number of sensors or aperture size while achieving a co-array with on the order of $N_{\tx}N_{\rx}$ contiguous virtual sensors. This allows generating low-redundancy sparse array configurations beyond hundreds (thousands) of physical (virtual) sensors. Such geometries also upper bound the redundancy of the MRA. Together with lower bounds, the potential suboptimality with respect to the MRA can also be estimated; see \cite{rajamaki2021sparsesymmetric}.

This section focuses on a class of \emph{symmetric} sparse array geometries that achieve a low redundancy and contiguous sum co-array. We consider the case of fully overlapping Tx and Rx arrays. Imposing symmetry on the array geometry is not only a simplification inspired by the symmetry of most known MRAs (which are optimal but computationally intensive to find), but it also ensures equivalence between the sum and difference co-arrays. Consequently, properly designed symmetric arrays are well-suited for both active and passive sensing.

\subsubsection{Generic symmetric array} \label{sec:symm_gen}

Let $\mathbb{S}$ denote the set of normalized sensor positions of an array that is symmetric about $ \tfrac{1}{2}\max\mathbb{S}$, i.e., $\mathbb{S} = \max\mathbb{S}-\mathbb{S}$. Here, $\max\mathbb{S}$ is the largest element of $\{0\}\subseteq\mathbb{S}\subset \mathbb{N}$.\footnote{Consequently, position $ \tfrac{1}{2}\max\mathbb{S}$ contains a sensor if and only if $|\mathbb{S}|$ is odd.} This symmetry implies that the sum and difference co-array of $\mathbb{S}$ are identical up to a shift \cite[p.~740]{hoctor1990theunifying}:
\begin{align}
	\mathbb{S} = \max\mathbb{S}-\mathbb{S} \implies  \mathbb{S}+ \mathbb{S}=\mathbb{S}- \mathbb{S}+\max\mathbb{S}.\label{eq:symm_sum_diff}
\end{align}
Symmetric arrays are thus suited for both passive and active sensing. Symmetry also greatly simplifies array design by enabling the synthesis of arrays with a contiguous sum co-array using configurations with a contiguous difference co-array, such as the well-known Nested array \cite{pal2010nested}.

\paragraph{Structure of symmetric array}
Any symmetric (linear) array can be described by a generator array $\{0\}\subseteq\mathbb{G}\subset\mathbb{N}$ and a non-negative offset $ \ell\in\mathbb{N} $, which determines the spacing between $ \mathbb{G} $ and its mirror image $ \max\mathbb{G}-\mathbb{G} $. Formally, the \emph{symmetric array} $\mathbb{S}=\mathbb{S}(\mathbb{G},\ell)$ can be defined as\footnote{This definition is general, since symmetric arrays generated by $\mathbb{G}$ and \emph{nonpositive} offset $\ell$ can equivalently be generated by $\mathbb{G}' =\max\mathbb{G}-\mathbb{G}$ and nonnegative $\ell'$.}
\begin{align}
		\mathbb{S}(\mathbb{G},\ell) \triangleq\mathbb{G}\cup (\max\mathbb{G}-\mathbb{G}+\ell).\label{eq:symm_gen}
\end{align}
We are specifically interested in $(\mathbb{G},\ell)$ pairs yielding an $\mathbb{S}$ with a contiguous sum co-array. The desired property $\mathbb{S}+\mathbb{S}=[0:2\max\mathbb{S}]$ can naturally be decomposed using \eqref{eq:symm_gen} into necessary and sufficient conditions involving offset $\ell$ and the sum/difference co-array of generator $\mathbb{G}$ \cite[Theorem~2]{rajamaki2021sparsesymmetric}. However, a simple sufficient condition stands out as particularly useful: If $ \mathbb{G} $ has a contiguous difference co-array and $ \ell $ is no larger than the number elements in the hole-free segment of the sum co-array of $\mathbb{G}$ (including $\{0\}$), then $\mathbb{S}$ has a contiguous sum and difference co-array (cf. \cite[Theorem~2 and Corollary~2]{rajamaki2021sparsesymmetric}).
\begin{theorem}[Sufficient condition for contiguous co-array]\label{thm:symm_gen}
	Consider the symmetric array $\mathbb{S}$ in \eqref{eq:symm_gen} with generator $\mathbb{G}$ and offset $\ell$. If $ \mathbb{G}-\mathbb{G}\supseteq [0\separator \max\mathbb{G}]$ and $  \mathbb{G}+\mathbb{G}\supseteq[0\separator \ell-1] $, then $\mathbb{S}+\mathbb{S}=[0\separator 2\max\mathbb{S}]$. 
\end{theorem}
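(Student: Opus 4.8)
The plan is to write $\mathbb{S}+\mathbb{S}$ explicitly from the decomposition \eqref{eq:symm_gen} and check that its three constituent pieces cover $[0\separator 2\max\mathbb{S}]$ without gaps. Set $g\triangleq\max\mathbb{G}$. Since $0\in\mathbb{G}$, the mirror image $\max\mathbb{G}-\mathbb{G}+\ell$ is contained in $[\ell\separator g+\ell]$, so $\max\mathbb{S}=g+\ell$ and $\mathbb{S}\subseteq[0\separator g+\ell]$; hence $\mathbb{S}+\mathbb{S}\subseteq[0\separator 2(g+\ell)]=[0\separator 2\max\mathbb{S}]$, which settles the ``$\subseteq$'' direction. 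For the reverse inclusion I would sort every pairwise sum of elements of $\mathbb{S}$ by whether both summands lie in $\mathbb{G}$, exactly one does, or neither does, obtaining
\begin{align*}
\mathbb{S}+\mathbb{S}=\big(\mathbb{G}+\mathbb{G}\big)\cup\big(g+\ell+(\mathbb{G}-\mathbb{G})\big)\cup\big(2g+2\ell-(\mathbb{G}+\mathbb{G})\big).
\end{align*}

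I would then bound each of these three sets using the two hypotheses. The cross term carries the argument: since $\mathbb{G}-\mathbb{G}$ is symmetric about $0$, the assumption $\mathbb{G}-\mathbb{G}\supseteq[0\separator \max\mathbb{G}]$ is equivalent to $\mathbb{G}-\mathbb{G}\supseteq[-g\separator g]$, so $g+\ell+(\mathbb{G}-\mathbb{G})\supseteq[\ell\separator 2g+\ell]$. For the two ``pure'' terms, the assumption $\mathbb{G}+\mathbb{G}\supseteq[0\separator \ell-1]$ covers $[0\separator \ell-1]$ directly, and reflecting this set through $g+\ell$ gives $2g+2\ell-(\mathbb{G}+\mathbb{G})\supseteq[2g+\ell+1\separator 2g+2\ell]$. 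The union of the three intervals is
\begin{align*}
[0\separator \ell-1]\cup[\ell\separator 2g+\ell]\cup[2g+\ell+1\separator 2g+2\ell]=[0\separator 2g+2\ell]=[0\separator 2\max\mathbb{S}],
\end{align*}
which together with the reverse inclusion already established completes the proof.

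Every step is elementary set arithmetic, so I do not expect a genuine obstacle; the points needing care are the bookkeeping of the shifts and reflections above and the degenerate case $\ell=0$. When $\ell=0$ the hypothesis $\mathbb{G}+\mathbb{G}\supseteq[0\separator \ell-1]$ is vacuous and the two pure pieces contribute nothing new, but the cross term alone already yields $g+(\mathbb{G}-\mathbb{G})\supseteq[0\separator 2g]=[0\separator 2\max\mathbb{S}]$, so the conclusion persists. Finally, combining the contiguity of $\mathbb{S}+\mathbb{S}$ just obtained with the symmetry relation \eqref{eq:symm_sum_diff} applied to $\mathbb{S}$ shows that $\mathbb{S}-\mathbb{S}=[-\max\mathbb{S}\separator \max\mathbb{S}]$ is contiguous as well, which is the difference-co-array half of the assertion made in the text preceding the theorem.
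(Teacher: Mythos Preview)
Your argument is correct and essentially identical to the paper's: both expand $\mathbb{S}+\mathbb{S}$ into the three pieces $(\mathbb{G}+\mathbb{G})\cup(\max\mathbb{G}+\ell+(\mathbb{G}-\mathbb{G}))\cup(2(\max\mathbb{G}+\ell)-(\mathbb{G}+\mathbb{G}))$, invoke the symmetry of $\mathbb{G}-\mathbb{G}$ to get $[-g\separator g]$, and concatenate the resulting three intervals. Your version adds the explicit ``$\subseteq$'' direction and the $\ell=0$ check, which the paper leaves implicit, but the core proof is the same.
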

\begin{proof}
 	By \eqref{eq:symm_gen}, the sum co-array of the symmetric array $\mathbb{S}$ is
	\begin{align*}
		\mathbb{S}+\mathbb{S}&=\mathbb{G}\cup(\max \mathbb{G}-\mathbb{G}+\ell)+ \mathbb{G}\cup(\max \mathbb{G}-\mathbb{G}+\ell)\\
		&=(\mathbb{G}+\mathbb{G})\cup(\mathbb{G}-\mathbb{G}+\max \mathbb{G}+\ell)\cup (2(\max\mathbb{G}+\ell)-(\mathbb{G}+\mathbb{G})).
		\end{align*}
		By the symmetry of the difference co-array, $ \mathbb{G}-\mathbb{G}\supseteq [0\separator \max\mathbb{G}]\implies \mathbb{G}-\mathbb{G}\supseteq [-\max\mathbb{G}\separator \max\mathbb{G}]$. Combining this with $  \mathbb{G}+\mathbb{G}\supseteq[0\separator \ell-1] $ implies that
		\begin{align*}
		\mathbb{S}+\mathbb{S}&\supseteq [0\separator \ell-1]\cup([-\max\mathbb{G}\separator \max\mathbb{G}]+\max\mathbb{G}+\ell)\cup(2(\max\mathbb{G}+\ell)-[0\separator \ell-1])\\
		&=[0\separator \ell-1]\cup[\ell \separator 2\max\mathbb{G}+\ell]\cup[2\max\mathbb{G}+\ell+1\separator 2(\max\mathbb{G}+\ell)]\\
		&=[0\separator 2(\max\mathbb{G}+\ell)].
	\end{align*}
	Noting that $\max\mathbb{G}+\ell=\max\mathbb{S}$ by \eqref{eq:symm_gen} completes the proof.\hfill$\blacksquare$
\end{proof}

A key takeaway of \cref{thm:symm_gen} is that the design of (symmetric) array configurations with a contiguous sum co-array can be considerably simplified by leveraging geometries with a contiguous difference co-array.

\paragraph{Minimum-redundancy symmetric array over generator class $\mathscr{G}$}
If $\mathbb{G}$ admits a simple parametric description, both the parameters of $\mathbb{G}$ and the offset $\ell$ may be optimized for minimal redundancy while imposing the sufficient conditions of \cref{thm:symm_gen} as constraints. Let $\mathscr{G}$ denote any class of array geometries with a contiguous difference co-array, i.e., $ \mathbb{G}-\mathbb{G}\supseteq [0\separator \max\mathbb{G}]\ \forall \mathbb{G}\in\mathscr{G}$. For a given generator class $\mathscr{G}$ and number of physical sensors $N$, the minimum-redundancy symmetric array over $\mathscr{G}$ is defined as any solution to
\begin{align}
	\underset{\mathbb{G}\in\mathscr{G}, \ell \in \mathbb{N}}{\txt{maximize}}\ \max \mathbb{G}+\ell\quad 
	\text{subject to}\quad 
	\begin{cases}
		|\mathbb{G}\cup (\max\mathbb{G}-\mathbb{G}+\ell)|=N\\
		\mathbb{G}+\mathbb{G}\supseteq [0\separator \ell-1].
	\end{cases}
	 \label{p:SA}
\end{align}
In particular, if \eqref{p:SA} has a (not necessarily unique) solution $(\mathbb{G}^\star,\ell^\star)$ for a given $\mathscr{G}$ and $N$, then we denote by  $\mathbb{S}(\mathbb{G}^\star,\ell^\star)$ a minimum-redundancy symmetric array for generator class $\mathscr{G}$. For general $\mathscr{G}$, \eqref{p:SA} is a combinatorial problem which may be impractical to solve. However, certain choices of $\mathscr{G}$ actually admit a closed-form solution for any $N\in\mathbb{N}_+$, as we will see in the next subsection.

\subsubsection{Symmetric nested array}

We now illustrate the general symmetric array design framework developed in \cref{sec:symm_gen}. As an example of a simple yet versatile generator, we use the Nested array (NA) \cite{pal2010nested}, which has a contiguous difference co-array. By \cref{thm:symm_gen}, the NA generator guarantees that the resulting symmetric array $\mathbb{S}$ has a contiguous sum co-array for suitable choices of offset $\ell$. The convenient parametric structure of the symmetric NA also enables finding its redundancy minimizing parameters in a closed form. The symmetric NA and the NA generator are illustrated in \cref{fig:NA_symm}.
\begin{figure}[]
	\centering
	\subfloat[Nested Array (NA)]{\includegraphics[width=.6\linewidth]{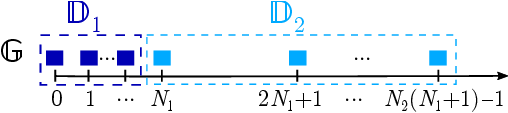}\label{fig:G_NA}}\\
	\subfloat[Symmetric NA with offset $\ell>N_1$ ]{\includegraphics[width=.9\linewidth]{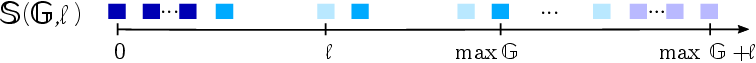}\label{fig:S-NA}}\\
	\subfloat[Symmetric NA with offset $\ell=N_1$ ]{\includegraphics[width=.8\linewidth]{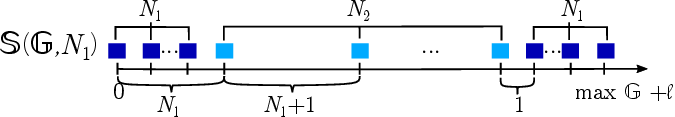}\label{fig:CNA}}
	\caption{The \protect\subref{fig:G_NA} NA generator $\mathbb{G}=\mathbb{D}_\txt{NA}$ and shift parameter $\ell$ define the \protect\subref{fig:S-NA} symmetric NA, which reduces to the configuration in \protect\subref{fig:CNA}, when $\ell=N_1$. The minimum-redundancy symmetric NA solving \eqref{p:SA} has the structure in \protect\subref{fig:CNA}. Adapted from \cite{rajamaki2021sparsesymmetric} {\copyright} 2021 IEEE}\label{fig:NA_symm}
\end{figure}

Let $\mathscr{G}$ be the class of NAs. In this case, the generator is $\mathbb{G}=\mathbb{D}_\txt{NA}=\mathbb{D}_\txt{NA}(N_1,N_2)$ with nonnegative parameters $ N_1,N_2 \in\mathbb{N}$ and
\begin{align}
	\mathbb{D}_\txt{NA}\!\triangleq\!\mathbb{D}_1\!\cup\!(\mathbb{D}_2+N_1), \txt{ where }
	\begin{cases}
		\mathbb{D}_1\!=\![0\!\separator\!N_1-1]\\
		\mathbb{D}_2\!=\![0\!\separator\!N_1+1\!\separator (N_2-1)(N_1+1)].
	\end{cases}\label{eq:G_s-na}
\end{align}
By \eqref{eq:symm_gen}, the \emph{symmetric} NA can therefore be written as
\begin{align}
	\mathbb{S}_\txt{NA}(N_1,N_2,\ell)
	\triangleq \mathbb{S}(\mathbb{D}_\txt{NA},\ell)=\mathbb{D}_\txt{NA}\cup  (\max \mathbb{D}_\txt{NA} - \mathbb{D}_\txt{NA}+\ell).\label{eq:s-na}
\end{align}
Triplet $(N_1,N_2,\ell)$ controls the redundancy of $\mathbb{S}_\txt{NA}$. For example, setting $ \ell=0 $ can increase redundancy and consequently improve robustness to sensor failures \cite[Definition~2]{liu2018optimizing}. Of special interest herein is the case $\ell = N_1$, which yields the following low-redundancy geometry:\footnote{This case is equivalent to $\ell =(N_1+1)k+N_1$ for any $k\in[0\separator N_2]$ by reparametrization.}
\begin{align}
	\begin{aligned}
	\mathbb{D}_\txt{CNA}(N_1,N_2)&\triangleq 
	\mathbb{S}_\txt{NA}(N_1,N_2,\ell=N_1)\\
	&=
	\mathbb{D}_1 \cup (\mathbb{D}_2 +N_1) \cup (\mathbb{D}_1+(N_1+1)N_2).\label{eq:cna}
\end{aligned}
\end{align}
This array configuration, originally studied in the context of additive combinatorics \cite[Satz~2]{rohrbach1937beitrag}, is also known as the Concatenated nested array (CNA) \cite{rajamaki2017sparselinear}. The solution to \eqref{p:SA} actually has the same form as \eqref{eq:cna} when the elements of $\mathscr{G}$ are given by \eqref{eq:G_s-na} \cite[Proposition~1]{rajamaki2021sparsesymmetric}. Intuitively, this is because any other offset $\ell $ cannot decrease redundancy. Due to its simple structure, the minimum-redundancy parameters of \eqref{eq:cna}, and by implication \eqref{eq:s-na}, can be found in a closed form. For any given $ N=4m+k$, where $ m\in \mathbb{N}, k \in \{0:3\}$, and the elements of $\mathscr{G}$ follow \eqref{eq:G_s-na}, the solution to \eqref{p:SA} is $\mathbb{G}^\star=\mathbb{D}_\txt{CNA}(N_1^\star,N_2^\star)$, $\ell^\star=N_1^\star$, where \cite[Theorem~3]{rajamaki2021sparsesymmetric}
\begin{align*}
	N_1^\star = (N-\alpha)/4,\quad
	N_2^\star = (N+\alpha)/2,\quad \txt{and}\quad \alpha = (k+1)\bmod 4 -1.
\end{align*}
We note that the optimal values of $N_1$ and $N_2$ are similar for the minimally redundant NA (w.r.t. the difference co-array), where $N_1^\star= N_2^\star= N/2$ for even $N$ \cite{pal2010nested}. However, while the NA has a contiguous difference co-array, its sum co-array is non-contiguous. Indeed, \cref{fig:na_symm_ex} shows that the symmetric NA achieves a larger contiguous sum co-array than the NA for the same physical aperture or number of physical sensors.
\begin{figure}
	\newcommand{\figwidth}{13}
	\newcommand{\figheight}{3}
	\newcommand{\Lmax}{29}
	\newcommand{\offs}{19}
	\pgfplotsset{every x tick label/.append style={font=\tiny},every y tick label/.append style={font=\scriptsize}}
	\centering
	\subfloat[Nested array, $N=8$ sensors]{    
		\begin{tikzpicture}
			\begin{axis}[width=\figwidth cm ,height=\figheight cm,xmin=-0.2,xmax=2*\Lmax+0.2,xtick={0,2,...,2*\Lmax},ytick={-1,0,1},yticklabels={$\mathbb{D}_\Delta+\offs$,$\mathbb{D}_\Sigma$,$\mathbb{D}$},ytick style={draw=none},ymin=-1.5,ymax=1.5,xticklabel shift = 0 pt,xtick pos=bottom,axis line style={draw=none}]
				\addplot[blue,only marks,mark=square*,mark size=2,y filter/.code={\pgfmathparse{\pgfmathresult*0+1}}] table[x=d,y=d]{Data/D_NA.dat};
				\addplot[only marks,mark=*,mark size=2.5,y filter/.code={\pgfmathparse{\pgfmathresult*0}}] table[x=d,y=d]{Data/D_Sigma_NA.dat};
				\addplot[gray,only marks,mark=*,mark size=2.5,y filter/.code={\pgfmathparse{\pgfmathresult*0-1}},x filter/.code={\pgfmathparse{\pgfmathresult+\offs}}] table[x=d,y=d]{Data/D_Delta_NA.dat};
			\end{axis}
		\end{tikzpicture}\label{fig:na_L}
	}\\%
	\subfloat[Nested array, $N=10$ sensors]{    
		\begin{tikzpicture}
			\begin{axis}[width=\figwidth cm ,height=\figheight cm,xmin=-0.2,xmax=2*\Lmax+0.2,xtick={0,2,...,2*\Lmax},ytick={-1,0,1},yticklabels={$\mathbb{D}_\Delta+\Lmax$,$\mathbb{D}_\Sigma$,$\mathbb{D}$},ytick style={draw=none},ymin=-1.5,ymax=1.5,xticklabel shift = 0 pt,xtick pos=bottom,axis line style={draw=none}]
				\addplot[blue,only marks,mark=square*,mark size=2,y filter/.code={\pgfmathparse{\pgfmathresult*0+1}}] table[x=d,y=d]{Data/D_NA_alt.dat};
				\addplot[only marks,mark=*,mark size=2.5,y filter/.code={\pgfmathparse{\pgfmathresult*0}}] table[x=d,y=d]{Data/D_Sigma_NA_alt.dat};
				\addplot[gray,only marks,mark=*,mark size=2.5,y filter/.code={\pgfmathparse{\pgfmathresult*0-1}},x filter/.code={\pgfmathparse{\pgfmathresult+\Lmax}}] table[x=d,y=d]{Data/D_Delta_NA_alt.dat};
			\end{axis}
		\end{tikzpicture}\label{fig:na_N}
	}\\
	\subfloat[Symmetric Nested array, $N=10$ sensors]{    
		\begin{tikzpicture}
			\begin{axis}[width=\figwidth cm ,height=\figheight cm,xmin=-0.2,xmax=2*\Lmax+0.2,xtick={0,2,...,2*\Lmax},ytick={-1,0,1},yticklabels={$\mathbb{D}_\Delta+\offs$,$\mathbb{D}_\Sigma$,$\mathbb{D}$},ytick style={draw=none},ymin=-1.5,ymax=1.5,xticklabel shift = 0 pt,xtick pos=bottom,axis line style={draw=none}]
				\addplot[blue,only marks,mark=square*,mark size=2,y filter/.code={\pgfmathparse{\pgfmathresult*0+1}}] table[x=d,y=d]{Data/D_CNA.dat};
				\addplot[only marks,mark=*,mark size=2.5,y filter/.code={\pgfmathparse{\pgfmathresult*0}}] table[x=d,y=d]{Data/D_Sigma_CNA.dat};
				\addplot[gray,only marks,mark=*,mark size=2.5,y filter/.code={\pgfmathparse{\pgfmathresult*0-1}},x filter/.code={\pgfmathparse{\pgfmathresult+\offs}}] table[x=d,y=d]{Data/D_Delta_CNA.dat};
			\end{axis}
		\end{tikzpicture}\label{fig:sna_N}
	}
	\caption{NA and symmetric NA configurations using minimum-redundancy parameters, with corresponding sum co-array $\mathbb{D}_\Sigma$ and difference co-array $\mathbb{D}_\Delta$ (shifted to non-negative half plane). The \protect\subref{fig:sna_N} symmetric NA attains a larger contiguous sum co-array than the NA, both for the same \protect\subref{fig:na_L} physical array aperture and \protect\subref{fig:na_N} number of physical sensors. The difference co-array of the symmetric NA is also contiguous.}\label{fig:na_symm_ex}
\end{figure}
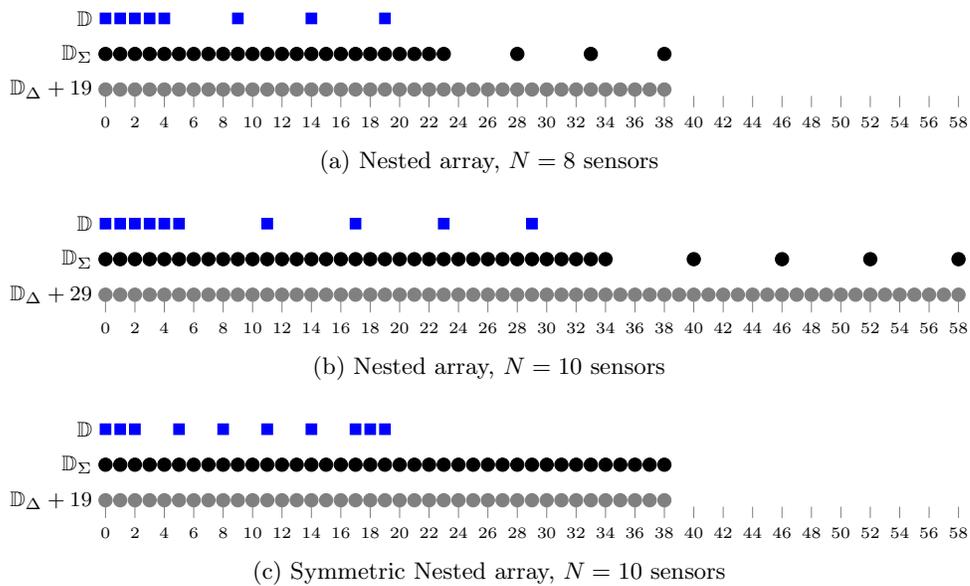

 The minimum-redundancy symmetric NA can be used to upper bound the asymptotic redundancy of the MRA. Specifically, denote the class of minimum-redundancy symmetric NAs by $\mathscr{C}^\star_\txt{S-NA}\triangleq\{\mathbb{S}_\txt{NA}^\star(N)\ |\ N\in\mathbbm{N}_+\} $, where $\mathbb{S}_\txt{NA}^\star(N)\triangleq \mathbb{D}_\txt{CNA}(N_1^\star,N_2^\star)$. The number of contiguous sum co-array elements can be shown to be $|\mathbb{S}_\txt{NA}^\star+\mathbb{S}_\txt{NA}^\star|=2\max\mathbb{S}_\txt{NA}^\star+1= (N^2+6N-7)/4-(\alpha -1)^2/4+1$. By \eqref{eq:R_inf}, the asymptotic redundancy of $\mathscr{C}^\star_\txt{S-NA}$ is therefore
\begin{align}
	R_\infty(\mathscr{C}^\star_\txt{S-NA})=\lim_{N\to\infty}\frac{N(N+1)}{2((N^2+6N-7)/4-(\alpha -1)^2/4+1)}=2. \label{eq:R_inf_s-na}
\end{align}
Consequently, the asymptotic redundancy of the MRA satisfies $R_\infty(\mathscr{C}_\txt{MRA})\leq R_\infty(\mathscr{C}_\txt{S-NA}^\star)=2$, where $\mathscr{C}_\txt{MRA}$ denotes the set of solutions to \eqref{p:MRA_fo}.

\subsubsection{Other symmetric arrays}

Beyond the NA, previously studied generator choices include the Wichmann \cite{wichmann1963anote} and Kl{\o}ve-Mossige bases \cite{mossige1981algorithms} originating in number theory. For array geometry design perspectives and technical details, see \cite{rajamaki2021sparsesymmetric}. See also \cite{liu2017maximally,cohen2020sparse} for symmetric arrays with a fractal structure. We remark that alternative generators in \eqref{eq:symm_gen} can improve upon the redundancy of the symmetric NA. In particular, a subclass $\mathcal{K}$ of the symmetric array generated by the Kl{\o}ve-Mossige basis---corresponding to a class of symmetric bases studied in the context of additive combinatorics \cite{klove1981class}---achieves asymptotic redundancy $R_\infty(\mathcal{K}) = 23/12 < 1.9167$ \cite[Theorem, p.~177]{klove1981class} (cf. \cite[Proposition~2]{rajamaki2021sparsesymmetric}). This lower bounds \eqref{eq:R_inf_s-na}, and therefore provides a tighter upper bound on the asymptotic redundancy of the (fully overlapping) MRA. Array configurations---symmetric or asymmetric---with an even lower asymptotic redundancy, yet a contiguous sum co-array, remain to be explored.

\section{Beamforming}\label{sec:coarray_beamforming}

This section considers \emph{beamforming} in sensing using sparse arrays. Beamforming collectively denotes spatially directive transmission or spatial filtering upon reception, which is used to enhance (suppress) signals impinging in/from directions of interest by the principle of constructive (destructive) wave interference.

\cref{sec:overview_bf} starts by briefly reviewing Tx, Rx, and joint Tx-Rx beamforming. We focus on the joint case, which is commonly encountered in imaging applications. \cref{sec:image_addition} considers an extension to Tx-Rx beamforming called \emph{image addition}, where carefully chosen Tx and Rx beamforming weight pairs are used to synthesize a desired Tx-Rx beampattern by linearly combining multiple ``component images''. This can be interpreted as assigning desired complex values to the beamforming weights of the sum co-array. We briefly discuss methods for finding the physical beamforming weights synthesizing a desired virtual array weighting, and describe the resulting trade-off between array sparsity and the number of component images, $Q$. A small value of $Q$ is often preferred to minimize the beampattern synthesis time.

\subsection{Rx, Tx, and joint Tx-Rx beamforming}\label{sec:overview_bf}
In transmit beamforming, the spatial self-interference pattern of the radiated wavefield, or so-called transmit beampattern, is controlled by the cross-correlation properties of the transmitted waveforms. In receive beamforming, the outputs of the Rx sensors are scaled and phase shifted prior to being coherently combined. The receive beampattern can be flexibly shaped and steered post data-acquisition when the beamforming architecture of the receiver is fully digital. In contrast, the transmit beampattern can typically change only between transmissions, where a transmission corresponds to the duration of the waveform or the maximum round-trip time that should be unambiguously resolved. The \emph{effective} Tx-Rx beampattern equals the product of the Tx and Rx beampatters, and hence achieves a narrower main beam or lower side lobe levels than the individual Tx and Rx beampatterns, as \cref{fig:beampattern} illustrates. A properly designed Tx-Rx beampattern also suppresses grating lobes present in either the Tx or Rx beampattern due to spatial aliasing.
\begin{figure}
	\centering
	\includegraphics[width=0.7\textwidth]{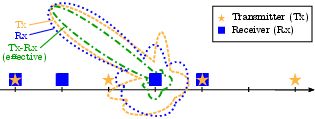}
	\caption{The effective Tx-Rx beampattern of an active array equals the product of the Tx and Rx beampatterns. Hence, it typically achieves a narrower main lobe and lower side lobe levels than either the Tx or Rx beampattern, which may suffer from grating lobes in case the physical array geometry is sparse. Adapted from \cite{rajamaki2020hybrid} {\copyright} 2020 IEEE.}\label{fig:beampattern}
\end{figure}

\subsubsection{Receive beamforming} 
The Rx beampattern magnitude, i.e., the angular magnitude response of the Rx array employing beamforming weight vector $\bm{w}_{\rx}\in\mathbb{C}^{N_{\rx}}$ is
\begin{align*}
	B_{\rx}(\theta)= |H_{\rx}(\theta)|^2= |\bm{w}_{\rx}^{\HT}\bm{a}_{\rx}(\theta)|^2,
\end{align*}
where $H_{\rx}(\theta)\in\mathbb{C}$ is the \emph{complex-valued Rx beampattern} or angular response,
\begin{align}
	H_{\rx}(\theta)\triangleq \bm{w}_{\rx}^{\HT}\bm{a}_{\rx}(\theta).\label{eq:H_r}
\end{align}
A typical choice for the Rx beamforming vector $\bm{w}_{\rx}$ is the spatial matched filter $\bm{w}_{\rx}=c \bm{a}_{\rx}(\phi)$, where $c\in\mathbb{C}\setminus\{0\}$. The matched filter maximizes the SNR of the beamformer output in case of a single scatterer in direction $\phi$ embedded in zero-mean white noise (as a simple consequence of the Cauchy-Schwartz inequality). Other choices of $\bm{w}_{\rx}$ can strike a different balance between main lobe width and side lobe levels, or adapt to the scattering environment
\cite{vanveen1988beamforming}. The widely used MVDR (minimum variance distortionless response) beamformer is a prime example of such a design.
 
The beamformed output of the Rx array gives rise to a multiple-input single-output (MISO) model. Specifically, the received signal matrix $\bm{Y}$ in \eqref{eq:Y} after Rx beamforming reduces to a $T$-dimensional \emph{temporal} (row) vector
\begin{align*}
	\bm{w}_{\rx}^{\HT}\bm{Y}
	=\sum_{k=1}^K\gamma_k H_{\rx}(\theta_k)\bm{a}_{\tx}^{\T}(\theta_k)\bm{S}^{\T}+\bm{w}_{\rx}^{\HT}\bm{N}.
\end{align*}
Rx beamforming is a key component in wireless communications, radar and passive sensing-based direction finding systems, in part due to its low computational complexity (once an appropriate $\bm{w}_{\rx}$ is found). In practice, Rx beamforming may also be employed out of necessity due to hardware constraints imposed by a hybrid or an analog beamforming architecture. 

\subsubsection{Transmit beamforming}

The angular power distribution of the wavefield radiated by the Tx array, or Tx beampattern magnitude, is given by \cite{fuhrmann2008transmit}
\begin{align*}
	B_{\tx}(\theta)=	\bm{a}_{\tx}^{\HT}(\theta)\bm{S}^{\HT}\bm{S}\bm{a}_{\tx}(\theta).
\end{align*}
Clearly, $B_{\tx}(\theta)$ depends on the cross-correlation properties of the transmitted waveforms via positive semidefinite matrix $\bm{R}_{\tx}\triangleq \bm{S}^{\HT}\bm{S}$. For example, $\bm{R}_{\tx}=\bm{a}_{\tx}(\phi)\bm{a}_{\tx}^{\HT}(\phi)$ and $\bm{R}_{\tx}=\bm{I}$ correspond to highly directive and omnidirectional transmission, respectively. The former case is also known as the phased array or spatial matched Tx filter (in direction $\phi$), whereas the latter is typically referred to as the orthogonal MIMO model. \cref{fig:Bt_examples} illustrates these two cases for a ULA and an MRA transmitter with $N_{\tx}=9$ sensors. The sparse array achieves a narrower main lobe due to its larger aperture, but its sidelobe levels are higher due to the spatial undersampling. Note that the waveform rank, $N_s=\rank(\bm{S})=\rank(\bm{R}_{\tx})$, differs in the phased array ($N_s=1$) and orthogonal MIMO ($N_s=N_{\tx}$) cases. Increasing $N_s$ generally extends the set of achievable Tx beampatterns---for example, see \cite{stoica2007onprobing}.%
\begin{figure}
	\centering
     \begin{tikzpicture}
	              \begin{axis}[
	              	width=6.3 cm,height=.0cm,
		              hide axis,
		              scale only axis,
		              legend style = {draw=none,fill=none},legend columns=2
		               ]
		                  \addplot[black,very thick,draw=none] table[x=phi,y=B]{Data/Bt_1.dat};
		                  \addlegendentry{phased array}
		                  \addplot[gray,very thick, dashed,draw=none] table[x=phi,y=B]{Data/Bt_2.dat};
		                  \addlegendentry{orthogonal MIMO}
		              \end{axis}
	          \end{tikzpicture}\vspace{-.3cm}\\
	\subfloat[ULA ($N_{\tx}=9$)]{
	\begin{tikzpicture}
		\begin{axis}[width=6.3 cm,height=4 cm,ymin=0,ylabel={$B_{\tx}(\theta)$},xmin=-pi/2,xmax=pi/2,xlabel={Angle, $\theta$},xtick={-pi/2,0,pi/2},xticklabels={$-\pi/2$,$0$,$\pi/2$},title style={yshift=0 pt},xticklabel shift = 0 pt,xlabel shift = {-5 pt},yticklabel shift=0pt,ylabel shift = 0 pt,ymode=log,ymin=0.001,ymax=1.2]
			\addplot[black,very thick] table[x=phi,y=B]{Data/Bt_1.dat};
			\addplot[gray,very thick, dashed] table[x=phi,y=B]{Data/Bt_2.dat};
		\end{axis}
	\end{tikzpicture}
}
\subfloat[MRA ($N_{\tx}=9$)]{
\begin{tikzpicture}
	\begin{axis}[width=6 cm,height=4cm,ymin=0,xmin=-pi/2,xmax=pi/2,xlabel={Angle, $\theta$},xtick={-pi/2,0,pi/2},xticklabels={$-\pi/2$,$0$,$\pi/2$},title style={yshift=0 pt},xticklabel shift = 0 pt,xlabel shift = {-5 pt},yticklabel shift=0pt,ylabel shift = 0 pt,ymode=log,ymin=0.001,ymax=1.2]
		\addplot[black,very thick] table[x=phi,y=B]{Data/Bt_1_sparse.dat};
		\addplot[gray,very thick, dashed] table[x=phi,y=B]{Data/Bt_2_sparse.dat};
	\end{axis}
\end{tikzpicture}
}
	\caption{Transmit beampattern (magnitude squared) of phased array (steered to boresight) and orthogonal MIMO transmission. For the same Tx power, the phased array yields higher beamforming gain in the look direction, whereas orthogonal waveforms illuminate a wider field-of-view.}
	\label{fig:Bt_examples}
\end{figure}
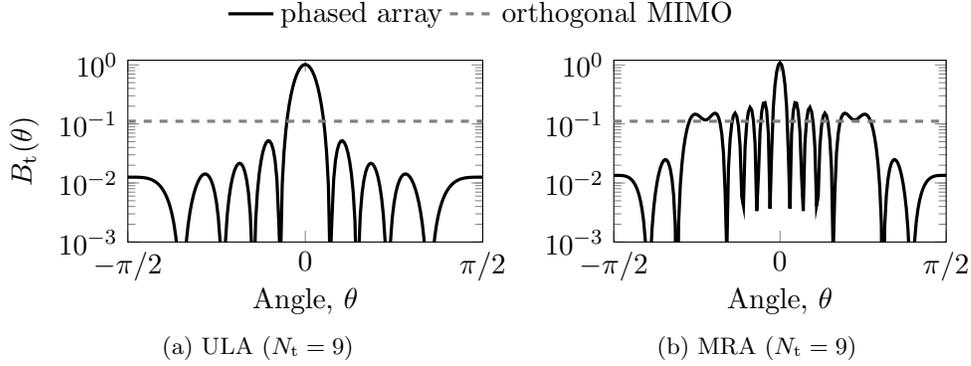
 
For brevity, the remainder of the chapter focuses on the case of unit waveform rank $N_s=1$, i.e. the same waveform is launched from each sensor with an appropriate phase shift. In this case, the \emph{complex-valued Tx beampattern} $H_{\tx}(\theta)\in\mathbb{C}$ can be defined analogously to the Rx case in \eqref{eq:H_t} as
 \begin{align}
 	H_{\tx}(\theta)\triangleq \bm{w}_{\tx}^{\HT}\bm{a}_{\tx}(\theta),
 	\label{eq:H_t}
 \end{align}
 where $\bm{w}_{\tx} \triangleq \bm{S}^{\HT}\bm{w}_s\in\mathbb{C}^{N_{\tx}}$ is a Tx beamforming weight vector and $\bm{w}_s\in\mathbb{C}^T$ is the corresponding temporal waveform combining vector. Since $N_s=1$, $|H_{\tx}(\theta)|^2\propto B_{\tx}(\theta)$ holds for any $\bm{w}_s$ not in the null space of $\bm{S}^{\HT}$. To see this, write $\bm{S}=\bm{u}\bm{v}^{\HT}$, where $\bm{u}\in\mathbb{C}^{T}$, $\bm{v}\in\mathbb{C}^{N_{\tx}}$ are nonzero vectors and $\|\bm{u}\|_2=1$. Consequently, $|H_{\tx}(\theta)|^2=|\bm{w}_s^{\HT}\bm{u}|^2|\bm{v}^{\HT}\bm{a}_{\tx}(\theta)|^2=|\bm{w}_s^{\HT}\bm{u}|^2 B_{\tx}(\theta)$. A reasonable choice for $\bm{w}_s$ is the temporal matched filter $\bm{w}_s=c \bm{u}$, for some $c\in\mathbb{C}\setminus\{0\}$. We remark that if $N_s\geq 2$, the interpretation of \eqref{eq:H_t} changes. In this case, $H_{\tx}(\theta)$ is a partly \emph{virtual} Tx beampattern defined by both waveform matrix $\bm{S}$, which affects the radiated wavefield, and combining vector $\bm{w}_s$, which does not. For example, in the case of orthogonal waveforms (implying full waveform rank, $N_s=N_{\tx}$), the transmit power is radiated omnidirectionally. Hence all beamforming is performed entirely synthetically post-reception.

The received signal after Tx beamforming constitutes a SIMO model, which takes the form of an $N_{\rx}$-dimensional \emph{spatial} vector
\begin{align*} 
	\bm{Y}\bm{w}_s^\ast
	=\sum_{k=1}^K\gamma_k\bm{a}_{\rx}(\theta_k)H_{\tx}(\theta_k)+\bm{N}\bm{w}_s^\ast.
\end{align*}

\subsubsection{Joint transmit-receive beamforming}\label{sec:joint_txrx_beamforming}
By \labelcref{eq:H_r,eq:H_t}, the \emph{complex-valued Tx-Rx beampattern} is defined as
\begin{align}
	H_{\txrx}(\theta)\triangleq H_{\tx}(\theta)H_{\rx}(\theta)
	=(\bm{w}_{\tx}\kron \bm{w}_{\rx})^{\HT}( \bm{a}_{\tx}(\theta)\kron \bm{a}_{\rx}(\theta)).
	\label{eq:H_tr}
\end{align}
By \eqref{eq:Y}, the output after joint Tx-Rx beamforming can thus be written as
\begin{align}
	y\triangleq	\bm{w}_{\rx}^{\HT}\bm{Y}\bm{w}_s^\ast 
	=\sum_{k=1}^K \gamma_kH_{\txrx}(\theta_k) +\tilde{n},
	\label{eq:y_txrx}
\end{align}
where $\tilde{n}\triangleq \bm{w}_{\rx}^{\HT}\bm{N}\bm{w}_s^\ast=(\bm{w}_s\kron \bm{w}_{\rx})^{\HT}\bm{n}\sim\mathcal{CN}(0,\sigma^2\|\bm{w}_s\|_2^2\|\bm{w}_{\rx}\|_2^2)$ is a zero-mean complex-valued normally distributed noise term by the assumption in \eqref{eq:Y}.

\cref{eq:H_tr} can be interpreted as the effective \emph{point spread function} (PSF) of the Tx-Rx array. The PSF is the spatial impulse response of a linear imaging system \cite[p.~20]{goodman1996introduction}, and hence characterizes the resolution and interference suppression capability of the joint Tx-Rx beamformer. There are many well-known approaches for realizing a desired $H_{\txrx}(\theta)$ by jointly choosing the Tx and Rx beampatterns $H_{\tx}(\theta)$ and $H_{\rx}(\theta)$, including co-prime beamforming \cite{vaidyanathan2011sparsesamplers}, and various adaptive methods based on, for instance, maximizing target detection performance or signal-to-interference-plus-noise-ratio (SINR) \cite{bell1993information,chen2009mimo,pillai2011waveform,liu2014joint}. Related Tx-Rx beamformer design problems are also encountered in MIMO communications \cite{palomar2003joint}, as well as joint communications and sensing applications \cite{tsinos2021joint}. 

\paragraph{Sum co-array interpretation}
By \labelcref{eq:A_krao_A_Sigma}, the sum co-array offers a convenient interpretation of \labelcref{eq:H_tr}. In particular, $H_{\txrx}(\theta)=\bm{w}_\Sigma^{\HT}\bm{a}_\Sigma(\theta)$, where $\bm{a}_\Sigma(\theta)\in\mathbb{C}^{N_\Sigma}$ is a sum co-array manifold vector following \eqref{eq:A_Sigma}, and $\bm{w}_\Sigma = \bm{\Upsilon}(\bm{w}_{\tx}\kron \bm{w}_{\rx})$ can be interpreted as a \emph{sum co-array beamforming weight vector}. Vector $\bm{w}_\Sigma$ depends on both the physical Tx and Rx beamforming weights, as well as the redundancy pattern $ \bm{\Upsilon} \in\{0,1\}^{N_\Sigma \times N_{\rx}N_{\tx}}$ of the joint Tx-Rx array geometry. Indeed, choosing unit physical beamforming weights, $\bm{w}_{\xx}=\bm{1}_{N_{\xx}}$, yields sum co-array weights corresponding to the multiplicities of each virtual sensor, $\bm{w}_\Sigma=\bm{\upsilon}_\Sigma$.

\cref{fig:convolutions_nat} shows an example of the beamforming weights and beampattern (magnitude) of both the physical array and sum co-array in case of a ULA and MRA of equivalent aperture. Both arrays have fully overlapping Tx and Rx sensors. The ULA has $N=9$ physical sensors, whereas the MRA has $N=6$. The Tx-Rx beampattern and beamforming weights of the co-array are a Fourier transform pair: the former is the product of the Tx and Rx beampatterns, and the latter is the convolution of the Tx and Rx beamforming weights. Recovering the Tx and Rx beamformers from the Tx-Rx beampattern can also be viewed as a polynomial factorization problem \cite{mitra2010general,martin2010coarray}.
\begin{figure}
		\begin{tabular}{c c c c c  c} 
			\hline
			&Tx&&Rx&&Sum co-array\\
			\hline\hline
			\multirow{3}{*}{\rotatebox[origin=c]{-90}{ULA}}&
			\begin{minipage}{.18\linewidth}
				\begin{center}
					\includegraphics[width=1\linewidth]{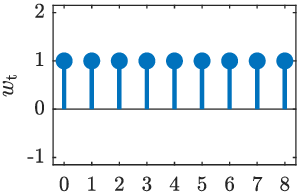} 
				\end{center}
			\end{minipage}
			&$ \ast$ &
			\begin{minipage}{.18\linewidth}
				\begin{center}
					\includegraphics[width=1\linewidth]{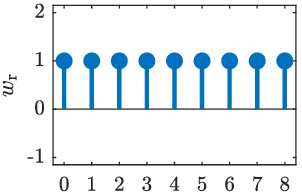} 
				\end{center}
			\end{minipage}
			&$ =  $&
			\begin{minipage}{.36\linewidth}
				\begin{center}
					\includegraphics[width=1\linewidth]{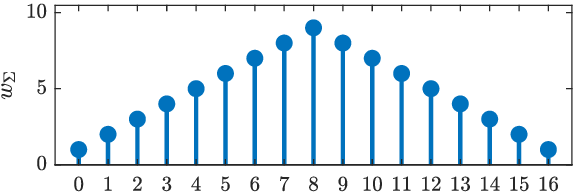} 
				\end{center}
			\end{minipage}\\
			&
			\begin{minipage}{.18\linewidth}
				\begin{center}
					\includegraphics[width=1\linewidth]{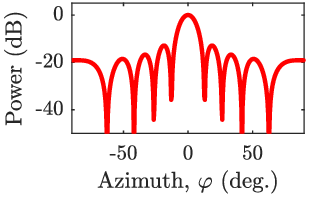} 
				\end{center}
			\end{minipage}
			&$ \times$ &
			\begin{minipage}{.18\linewidth}
				\begin{center}
					\includegraphics[width=1\linewidth]{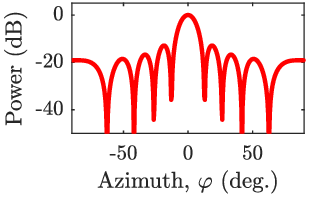} 
				\end{center}
			\end{minipage}
			&$ =  $&
			\begin{minipage}{.36\linewidth}
				\begin{center}
					\includegraphics[width=1\linewidth]{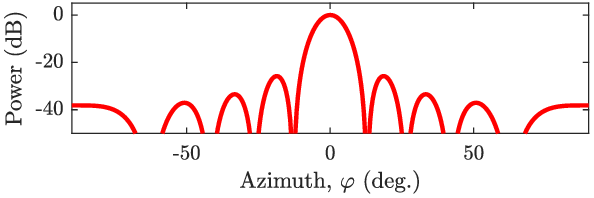} 
				\end{center}
			\end{minipage}\\
			\hline
			\multirow{3}{*}{\rotatebox[origin=c]{-90}{MRA}}&
			\begin{minipage}{.18\linewidth}
				\begin{center}
					\includegraphics[width=1\linewidth]{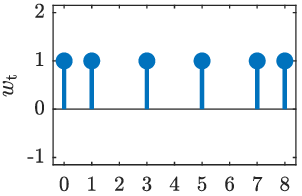} 
				\end{center}
			\end{minipage}
			&$ \ast $ &
			\begin{minipage}{.18\linewidth}
				\begin{center}
					\includegraphics[width=1\linewidth]{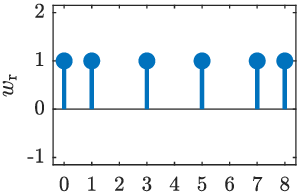} 
				\end{center}
			\end{minipage}
			&$= $&
			\begin{minipage}{.36\linewidth}
				\begin{center}
					\includegraphics[width=1\linewidth]{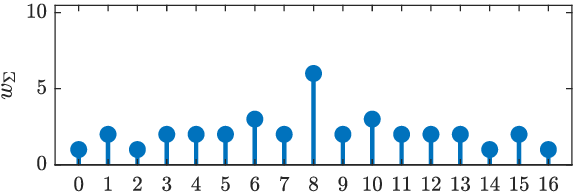} 
				\end{center}
			\end{minipage}
			\\
			&
			\begin{minipage}{.18\linewidth}
				\begin{center}
					\includegraphics[width=1\linewidth]{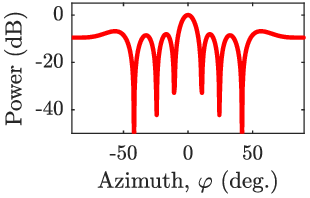} 
				\end{center}
			\end{minipage}
			&$ \times $ &
			\begin{minipage}{.18\linewidth}
				\begin{center}
					\includegraphics[width=1\linewidth]{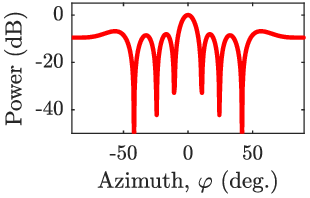} 
				\end{center}
			\end{minipage}
			&$= $&
			\begin{minipage}{.36\linewidth}
				\begin{center}
					\includegraphics[width=1\linewidth]{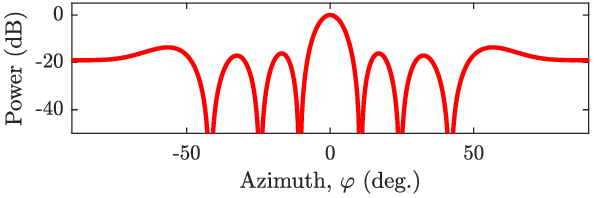} 
				\end{center}
			\end{minipage}
		\end{tabular}
	\caption{Beamforming weights and corresponding beampatterns of ULA and MRA of equivalent aperture. The Tx-Rx (product) beampattern corresponds to a co-array weighting, which is the convolution of the Tx and Rx beamforming weights. Adapted from \cite{rajamaki2021sparsesensor} {\copyright} 2021 R. Rajam\"{a}ki.}\label{fig:convolutions_nat}
\end{figure}

\subsection{Image addition}\label{sec:image_addition}

The beamformer output \eqref{eq:y_txrx} can be written using \eqref{eq:H_tr} as $y=\bm{w}_{\txrx}^{\HT}(\bm{A}_{\tx}\krao \bm{A}_{\rx})\bm{\gamma}+\tilde{n}$, where $\bm{w}_{\txrx}\in\mathbb{C}^{N_{\tx}N_{\rx}}$ denotes the effective Tx-Rx beamforming weight vector
\begin{align}
	\bm{w}_{\txrx}=\bm{w}_{\tx}\kron\bm{w}_{\rx}=\vecm(\bm{w}_{\rx}\bm{w}_{\tx}^{\T}).\label{eq:y_txrx_2}
\end{align}
Hence, $\bm{w}_{\txrx}$ is constrained to vectors in $\mathbb{C}^{N_{\tx}N_{\rx}}$ with Kronecker structure. The idea of \emph{image addition} \cite{hoctor1990theunifying} is to replace rank-1 matrix $\bm{w}_{\rx}\bm{w}_{\tx}^{\T}$ in \eqref{eq:y_txrx_2} by a synthesized \emph{Tx-Rx beamforming weight matrix} $\bm{W}$ of arbitrary rank. Specifically, let $\bm{W}_{\xx}\triangleq [\bm{w}_{\xx}[1],\bm{w}_{\xx}[2],\ldots,\bm{w}_{\xx}[Q]]\in\mathbb{C}^{N_{\xx}\times Q}$ denote a Tx/Rx beamforming weight matrix with $Q\leq \min(N_{\tx},N_{\rx})$ linearly independent columns. Matrix $\bm{W}\in\mathbb{C}^{N_{\rx}N_{\tx}}$ may then be defined as
\begin{align}
	\bm{W}\triangleq  \bm{W}_{\rx}\bm{W}_{\tx}^{\T}=\sum_{q=1}^Q \bm{w}_{\rx}[q]\bm{w}_{\tx}^{\T}[q].\label{eq:W}
\end{align}
We call $Q=\rank(\bm{W})$ the number of (linearly independent) \emph{component images}. 

Synthetic Tx-Rx beamforming vector $\vecm(\bm{W})$ is realized by linearly combining the beamformed outputs of different Tx-Rx beamforming weight pairs. A crucial assumption is that the scattering scene is stationary over the acquisition period in the sense that $\bm{\gamma}$ is fixed and independent of $q$. Indeed, only waveform matrix $\bm{S}$ and noise matrix $\bm{N}$ in \eqref{eq:Y} may vary with $q$. Denoting the Tx-Rx beamformer output of the $q$th component image by $y[q]\triangleq \bm{w}_{\rx}^{\HT}[q]\bm{Y}[q]\bm{w}_s^\ast[q]$, the joint Tx-Rx beamformer output after image addition is defined as
\begin{align}
	\bar{y}\triangleq \sum_{q=1}^Q y[q]
	=\vecm^{\HT}(\bm{W})(\bm{A}_{\tx}\krao \bm{A}_{\rx})\bm{\gamma}+\bar{n},\label{eq:y_txrx_ia}
\end{align}
where $\bar{n}=\sum_{q=1}^Q \bm{w}_{\rx}[q]\bm{N}[q]\bm{w}_s^\ast[q]$ is a noise term. Consequently, the synthesized Tx-Rx beampattern (or synthesized PSF) becomes
\begin{align}
	\bar{H}_{\txrx}(\theta)
	=\sum_{q=1}^Q H_{\txrx}^{(q)}(\theta)
	=\vecm^{\HT}(\bm{W})(\bm{a}_{\tx}(\theta)\kron\bm{a}_{\rx}(\theta)),
	\label{eq:H_tr_ia}
\end{align}
where $H_{\txrx}^{(q)}(\theta)=H_{\tx}^{(q)}(\theta)H_{\rx}^{(q)}(\theta)$; $H_{\tx}^{(q)}(\theta)=\bm{w}_{\tx}^{\HT}[q]\bm{a}_{\tx}(\theta)$; and $H_{\rx}^{(q)}(\theta)=\bm{w}_{\rx}^{\HT}[q]\bm{a}_{\rx}(\theta)$ denote $q$th component (complex-valued) Tx-Rx, Tx, and Rx beampatterns, respectively. 

\paragraph{Sum co-array interpretation}
The synthesized Tx-Rx beampattern in \labelcref{eq:H_tr_ia} can be rewritten using sum co-array manifold vector $\bm{a}_\Sigma(\theta)\in\mathbb{C}^{N_\Sigma}$ as
\begin{align}
	\bar{H}_{\txrx}(\theta)
	= \bm{{w}}_\Sigma^{\HT}\bm{a}_\Sigma(\theta),\label{eq:H_tr_ia_sca}
\end{align}
where $\bm{w}_\Sigma\in\mathbb{C}^{N_\Sigma}$ is the (synthesized) sum co-array beamforming weight vector
\begin{align}
	\bm{{w}}_\Sigma= \bm{\Upsilon}\vecm(\bm{W}). \label{eq:w_sca}
\end{align}

\cref{fig:convolutions_rect} demonstrates image addition in case of the ULA and MRA of equal aperture in \cref{fig:convolutions_nat}. Both array configurations achieve the desired Tx-Rx beampattern due to having equivalent sum co-arrays. The MRA requires $Q=2$ component images as it has fewer physical sensors than the ULA. Next, we briefly discuss how to select $\bm{W}$, and ultimately, $\bm{W}_{\tx}$ and $\bm{W}_{\rx}$ in \eqref{eq:W}.
\begin{figure}
		\begin{tabular}{c c c c c  c} 
			\hline
			&Tx&&Rx&&Sum co-array\\
			\hline\hline
			\multirow{3}{*}{\rotatebox[origin=c]{-90}{ULA}}&
			\begin{minipage}{.18\linewidth}
				\begin{center}
					\includegraphics[width=1\linewidth]{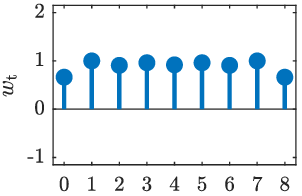} 
				\end{center}
			\end{minipage}
			&$ \ast$ &
			\begin{minipage}{.18\linewidth}
				\begin{center}
					\includegraphics[width=1\linewidth]{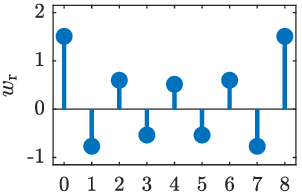} 
				\end{center}
			\end{minipage}
			&$ =  $&
			\begin{minipage}{.36\linewidth}
				\begin{center}
					\includegraphics[width=1\linewidth]{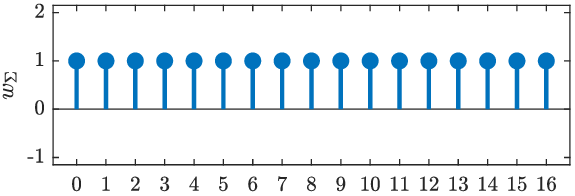} 
				\end{center}
			\end{minipage}\\
			&
			\begin{minipage}{.18\linewidth}
				\begin{center}
					\includegraphics[width=1\linewidth]{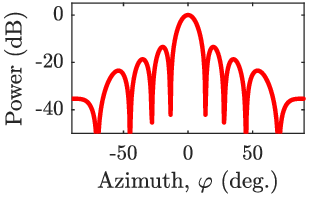} 
				\end{center}
			\end{minipage}
			&$ \times$ &
			\begin{minipage}{.18\linewidth}
				\begin{center}
					\includegraphics[width=1\linewidth]{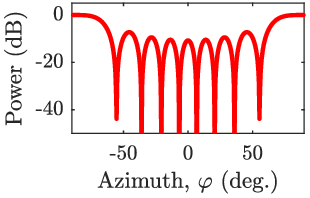} 
				\end{center}
			\end{minipage}
			&$ =  $&
			\begin{minipage}{.36\linewidth}
				\begin{center}
					\includegraphics[width=1\linewidth]{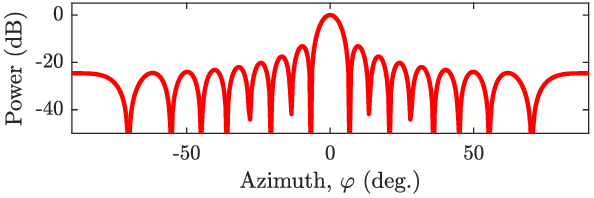} 
				\end{center}
			\end{minipage}\\
			\hline
			\multirow{5}{*}[-5ex]{\rotatebox[origin=c]{-90}{MRA}}&
			\begin{minipage}{.18\linewidth}
				\begin{center}
					\includegraphics[width=1\linewidth]{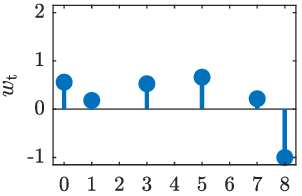} 
				\end{center}
			\end{minipage}
			&$ \ast $ &
			\begin{minipage}{.18\linewidth}
				\begin{center}
					\includegraphics[width=1\linewidth]{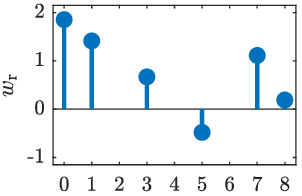} 
				\end{center}
			\end{minipage}
			&\multirow{2}{*}[-2ex]{$\Bigg]\hspace{-7.6pt}\oplus= $}&
			\multirow{2}{*}[-2ex]{
				\begin{minipage}{.36\linewidth}
					\begin{center}
						\includegraphics[width=1\linewidth]{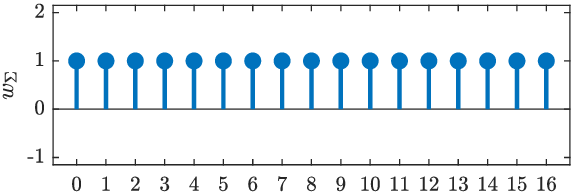} 
					\end{center}
				\end{minipage}
			}\\
			&
			\begin{minipage}{.18\linewidth}
				\begin{center}
					\includegraphics[width=1\linewidth]{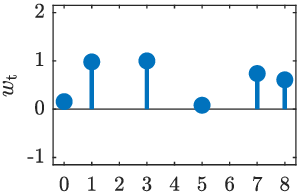} 
				\end{center}
			\end{minipage}
			&$ \ast $ &
			\begin{minipage}{.18\linewidth}
				\begin{center}
					\includegraphics[width=1\linewidth]{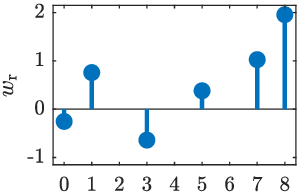} 
				\end{center}
			\end{minipage}
			&&\\
			&
			\begin{minipage}{.18\linewidth}
				\begin{center}
					\includegraphics[width=1\linewidth]{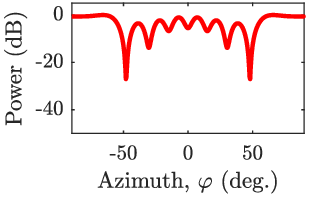} 
				\end{center}
			\end{minipage}
			&$ \times$ &
			\begin{minipage}{.18\linewidth}
				\begin{center}
					\includegraphics[width=1\linewidth]{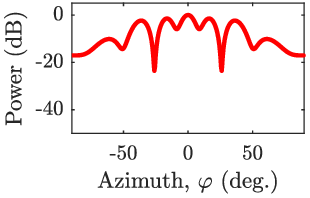} 
				\end{center}
			\end{minipage}
			&\multirow{2}{*}[-2ex]{$\Bigg]\hspace{-7.6pt}\oplus= $}&
			\multirow{2}{*}[-2ex]{
				\begin{minipage}{.36\linewidth}
					\begin{center}
						\includegraphics[width=1\linewidth]{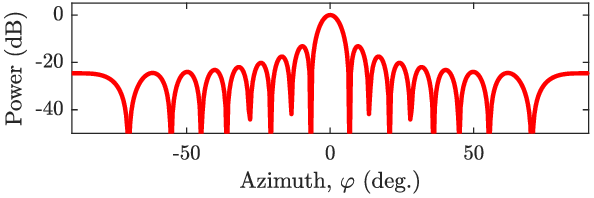} 
					\end{center}
				\end{minipage}
			}\\
			&
			\begin{minipage}{.18\linewidth}
				\begin{center}
					\includegraphics[width=1\linewidth]{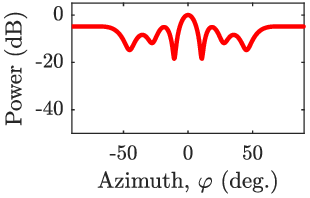} 
				\end{center}
			\end{minipage}
			&$ \times $ &
			\begin{minipage}{.18\linewidth}
				\begin{center}
					\includegraphics[width=1\linewidth]{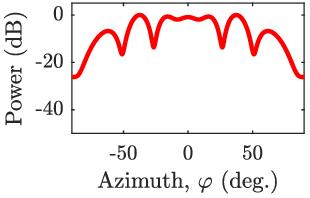} 
				\end{center}
			\end{minipage}
			&&\\
		\end{tabular}
	\caption{Tx-Rx beampattern synthesis using image addition. Both the ULA and MRA achieve the same effective beampattern due to having equivalent sum co-arrays. The MRA requires an additional component image due to having fewer physical sensors than the ULA.}\label{fig:convolutions_rect}
\end{figure}

\subsubsection{Joint optimization of Tx and Rx beamformers}
Designing beamforming weight matrix $\bm{W}$ (given an array geometry) can be cast as the following problem: find $\bm{W}$ synthesizing a desired Tx-Rx beampattern $\bar{H}_{\txrx}(\phi)$ in \eqref{eq:H_tr_ia} for $\phi\in[-\pi/2,\pi/2)$.\footnote{The more challenging problems of only constraining the desired beampattern \emph{magnitude} response $|\bar{H}_{\txrx}(\phi)|^2$, or allowing for a variable array geometry, are not considered herein.} This task, which can be viewed as an FIR (finite impulse response) filter design problem, may be simplified by either discretizing the angular space or, equivalently, by satisfying \eqref{eq:w_sca} for a desired co-array weight vector $\bm{{w}}_\Sigma\in\mathbb{C}^{N_\Sigma}$, provided the desired beampattern $\bar{H}_{\txrx}(\phi)$ is realizable. A key observation is that the system of equations in \eqref{eq:w_sca} is \emph{underdetermined} when the array configuration is \emph{redundant}. That is, if $ N_\Sigma< N_{\rx}N_{\tx} $, then infinitely many $ \bm{W} $ satisfy $ \bm{{w}}_\Sigma=\bm{\Upsilon}\vecm(\bm{W}) $, since $\bm{\Upsilon}\in\{0,1\}^{N_\Sigma\times N_{\tx}N_{\rx}}$ has a non-trivial null space. Therefore, the question arises: how should $\bm{W}$ be chosen? 

One alternative is the least-squares (LS) solution $\vecm(\bm{W})=\bm{\Upsilon}^{\T}\big(\bm{\Upsilon}\bm{\Upsilon}^{\T}\big)^{-1}\bm{{w}}_\Sigma$, which has the intuitive interpretation of distributing the entries of $\bm{{w}}_\Sigma$ \emph{equally} among the associated Tx-Rx sensor pairs \cite{kozick1991linearimaging,kozick1992coarray}. However, the LS solution does not generally yield a low-rank $ \bm{W} $, which implies that unnecessarily many transmissions or linearly independent Tx waveforms may be required for achieving a desired co-array weighting. Hence, one may instead solve a low-rank matrix recovery problem, where $\rank(\bm{W})$ is minimized subject to \eqref{eq:w_sca}. In practice, this problem may be approximately solved by, e.g., relaxing the rank objective to the nuclear norm, or by employing alternating minimization on the rank-revealing form $\bm{W}=\bm{W}_{\rx}\bm{W}_{\tx}^{\T}$ in \eqref{eq:W} combined with bisection over the rank (inner dimension) $Q$ \cite[pp.~85--86]{rajamaki2021sparsesensor}. Rather than satisfying $\bm{{w}}_\Sigma=\bm{\Upsilon}\vecm(\bm{W})$ with equality, an approximation error tolerance can also be allowed in practice \cite{rajamaki2021sparsesensor}.

\paragraph{Factorizing $\bm{W}$}
Given $ \bm{W} $, there is generally no unique choice for physical beamforming weight matrices $\bm{W}_{\tx}$ and $\bm{W}_{\rx}$ in \eqref{eq:W}. However, a convenient factorization of $\bm{W}$ is given by its singular value decomposition (SVD): $ \bm{W}=\bm{U}\bm{\Sigma}\bm{V}^{\HT} $, where $\bm{\Sigma}\in\mathbb{R}_+^{Q\times Q}$ is the diagonal matrix of positive singular values, and $ \bm{U}\in\mathbb{C}^{N_{\rx}\times Q} $, $ \bm{V} \in\mathbb{C}^{N_{\tx}\times Q} $ are the corresponding left-unitary singular vector matrices. Using the SVD, we may then set $ \bm{W}_{\tx} =\bm{V}^\ast$ and $ \bm{W}_{\rx}=\bm{U}\bm{\Sigma}$, for example. The factorization may be more involved in practice due to constraints imposed on $\bm{W}_{\xx}$, such as unit-modulus structure arising from the use of phase shifters \cite{rajamaki2019analog,rajamaki2020hybrid} or phase coded waveforms.

\paragraph{Spatio-temporal trade-offs}
A fundamental question regarding sparse arrays employing beamforming is as follows: Given \emph{any} desired co-array weight vector $\bm{{w}}_\Sigma\in\mathbb{C}^{N_\Sigma}$, how many component images $ Q=\rank(\bm{W}) $ are required to satisfy $\bm{{w}}_\Sigma=\bm{\Upsilon}\vecm(\bm{W})$ exactly? Intuition suggests that the sparser the array, the more component images are necessary to synthesize arbitrary co-array weightings. Indeed, the ULA with fully overlapping Tx and Rx sensors achieves \emph{any} given $\bm{{w}}_\Sigma$ when $ 1\leq Q\leq 2 $ \cite[pp.~86--88]{rajamaki2021sparsesensor}, whereas $ Q\propto N $ is necessary for sparse arrays with $N_{\tx}\propto N_{\rx}\propto N$ physical sensors and $N_\Sigma \propto N^2$ virtual sensors \cite[Proposition~1]{rajamaki2020hybrid}. Naturally, \emph{some} $ \bm{{w}}_\Sigma$ may be achieved either exactly or approximately using a smaller value of $ Q $. For example, $Q=1$ suffices to realize any $\bm{{w}}_\Sigma$ of the form $\bm{{w}}_\Sigma=\bm{\Upsilon}(\bm{t}\kron\bm{r})$, where $\bm{t}\in\mathbb{C}^{N_{\tx}}$ and  $\bm{r}\in\mathbb{C}^{N_{\rx}}$ (i.e., set $\bm{W}=\bm{r}\bm{t}^{\T}$). A direction for future work is investigating practically useful Tx-Rx beampatterns that can be achieved using sparse arrays employing $Q=\mathcal{O}(1)$ component images.

\section{Applications}\label{sec:applications}

We conclude the chapter by highlighting some classical, as well as timely and emerging practical applications of sparse arrays in active sensing. 

\paragraph{Imaging}
Imaging is a key task in applications including radar, sonar, and diagnostic medical ultrasound. An image typically consist of discrete pixels corresponding to a beamformed output such as \eqref{eq:y_txrx_ia}. Consequently, the beampattern synthesis approach of \cref{sec:coarray_beamforming} is well-suited for imaging using sparse arrays. Beamforming-based imaging allows extracting useful information from complex scattering environment containing continuous or distributed scatterers, potentially located in the near-field of the array. For example, in medical ultrasound imaging, the ultrasonic probe is in direct contact with the skin and hence in close-proximity to internal organs and other scatterers with intricate shapes. Despite real-world challenges, both linear and planar sparse arrays geometries employing beamforming have been successfully applied in practice \cite{kozick1993synthetic,lockwood1998realtime3d,ahmad2004designandimplementation,cohen2018sparseconvolutional,cohen2021sparse,lehtonen2021medical}. Sparse array imaging techniques can also offset beampattern degradation due to constraints on the beamforming architecture, such as unit-modulus beamforming weights or coarse phase shifter quantization \cite{rajamaki2020hybrid}.

\paragraph{MIMO radar}
MIMO radar systems frequently employ sparse arrays, such as the nested geometry in \eqref{eq:MRA_no_nst}, to improve angular resolution and target identifiability \cite{li2007mimoradar}. Topical applications of MIMO radar include autonomous sensing \cite{hugler2018radar} and, in particular, automotive radar \cite{patole2017automotive,sun2020mimoradar,engels2021automotive}. Sparse arrays are attractive in such applications since they enable inexpensive, compact, and mobile system architectures. Moreover, as achieving high angular resolution given a limited sensor budget is often critical, it may be necessary to extend the physical aperture to a degree where the co-array is no longer contiguous. In such cases, array interpolation techniques can reap the benefits of a large contiguous (interpolated) co-array, while mitigating ambiguities arising from the actual nonuniform sampling \cite{qiao2017unified,sun2020asparselinear,sun20214dautomotive,sarangi2022single}.

\paragraph{Wireless communications}
An application of sparse arrays in wireless communications is channel estimation. Especially at high frequencies, communication channels typically exhibit spatial sparsity with only a few strong line-of-sight components \cite{ayach2014spatially}. Estimating some of the channel parameters, such as directions of departure (arrival) at the transmitter (receiver), may thereby benefit from the improved resolution or identifiability offered by sparse arrays. At sub-millimeter wavelengths, hybrid or fully analog beamforming architectures may have to be employed for reasons of cost and power consumption; emphasizing the importance of beamformer design also in channel estimation \cite{shahsavari2022beamspace}. Sparse array configurations and sampling techniques can be used in conjunction with such beamforming architectures to further save system resources in, e.g., channel subspace estimation \cite{haghighatshoar2017massive} or direction finding in general \cite{ibrahim2017design,guo2018doaestimation,koochakzadeh2020compressed}. Sensing is also envisioned to play an increasingly important role in 6G and beyond communications \cite{saad2020avisionof6g}. Such emerging joint communication and sensing systems \cite{mishra2019toward,ma2020joint,ahmadipour2022aninformation} present a plethora of yet largely unexplored opportunities for sparse arrays.

\section{Conclusions}\label{sec:conclusions}
This chapter gave an overview of active sensing using sparse arrays, with a focus on low-redundancy one-dimensional array configurations and angular-domain transmit-receive (Tx-Rx) beamforming. A simple but general signal model was described highlighting the ubiquity of the sum co-array in active sensing. A review was provided of the active Minimum-redundancy array (MRA) for various degrees of shared sensors between the Tx and Rx arrays. In the case when all Tx and Rx sensors are shared, symmetric array configurations were introduced that are scalable, meaning that they can be generated for any number of sensors $N$. In contrast, the corresponding MRA is computationally challenging to find and hence still unknown for $N\geq 49$. Interestingly, however, there exist symmetric MRAs for each $N\leq 48$, suggesting the possibility of an optimal class of symmetric arrays. Properly designed symmetric arrays have both contiguous sum and difference co-arrays, which makes them directly applicable to both active and passive sensing. Moreover, extending the set of achievable Tx-Rx beampatterns by image addition was considered. Synthesizing a desired Tx-Rx beampattern can be interpreted as adjusting the virtual beamforming weights of the sum co-array. When the array configuration contains redundant virtual sensors, the physical beamforming weights may be optimized to minimize the beampattern synthesis (acquisition) time. This suggests a spatio-temporal trade-off between array sparsity and the number of physical beamforming weight vectors, or ``component images'', required in the synthesis. Indeed, the sparser the array, the more component images are necessary to achieve arbitrary co-array beamforming weights. Finally, selected timely and emerging active sensing applications were discussed. Sparse arrays offer cost-efficiency and improved performance in many use cases.

\backmatter

	\backmatter
	
	\bibliographystyle{abbrv}
	\bibliography{../references.bib}
	
	\printindex
	
\end{document}